\newtheorem{theorem}{Theorem}
\newtheorem{definition}{Definition}
\newtheorem{corollary}{Corollary}
\renewcommand{\vec}[1]{\mathbf{#1}}
\def\blfootnote{\xdef\@thefnmark{}\@footnotetext}
\begin{document}
	
		\title{\huge{Performance Analysis over Correlated/Independent  Fisher-Snedecor $\mathcal{F}$ Fading Multiple Access Channels}} 
	\author{Farshad~Rostami~Ghadi\IEEEmembership{}~and  Wei-Ping Zhu\IEEEmembership{}}
	\maketitle
	\begin{abstract}
	In this paper, we investigate the impact of correlated fading on the performance of wireless multiple access channels (MAC) in the presence and absence of side information (SI) at transmitters, where the fading coefficients are modeled according to the Fisher-Snedecor $\mathcal{F}$ distribution.
	Specifically, we represent two scenarios: ($i$) clean MAC (i.e, without SI at transmitters), ($ii$) doubly dirty MAC (i.e., with the non-causally known SI at transmitters). For both system models, we derive the closed-form expressions of the outage probability (OP) as well as the average capacity (AC) under independent fading conditions.
	Besides, exploiting copula theory, we obtain the exact analytical expressions for the OP and the AC under positive dependence fading conditions in both considered models. Finally, the validity of the analytical results is illustrated numerically.
	\end{abstract}
	\begin{IEEEkeywords}
		Multiple access channel, side information, correlated fading, outage probability, average capacity.
	\end{IEEEkeywords}
	\maketitle
	\blfootnote{\noindent Copyright (c) 2015 IEEE. Personal use of this material is permitted. However, permission to use this material for any other purposes must be obtained from the IEEE by sending a request to pubs-permissions@ieee.org.} 
	 	\blfootnote{\noindent F.R. Ghadi is with Departmento de Ingenieria  de Comunicaciones, Universidad de M\'alaga - Campus de Excelencia Internacional Andaluc\'ia Tech., M\'alaga 29071, Spain  (e-mail: $\rm farshad@ic.uma.es$)}
	 	
	 	\blfootnote{\noindent  W.P. Zhu is with Department of Electrical and Computer Engineering, Concordia University, Montreal, QC H3G 1M8, Canada. (e-mail: $\rm weiping@ece.concordia.ca$).}
	
	\blfootnote{Digital Object Identifier 10.1109/XXX.2021.XXXXXXX}
	\vspace{-3mm}
	\section{Introduction}\label{introduction}
Multiple access channel (MAC) is a fundamental model for uplink communications in multi-user information theory, which has a significant role in designing emerging technologies such as intelligent reflecting surface (RIS) multi-user communication systems for the fifth-generation (5G) beyond wireless networks. For instance, given the importance of determining the capacity region in the performance analysis of multi-user communication systems from an information-theoretic viewpoint, the capacity characterization of MAC has been recently investigated in \cite{zhangb2021intelligent} for the more complex RIS-aided multi-user communication systems compared with single-user systems, by defining two RIS deployment strategies. On the other hand, modeling reliable communication between a transmitter and a receiver in the presence of side information (SI) about the channel state has always been one of the main problems in information theory. Multiple access communications techniques with the non-causally known SI at the transmitters can be of great interest to meet the reliability challenges for applications like connected robotics and autonomous systems in sixth-generation (6G) technology \cite{saad2019vision}, since such knowledge (e.g., either channel state information (CSI) or interference awareness) can be leveraged to intelligently encode their information. So, by considering such information at transmitters, the destructive effects of the interference can be reduced and reliable communication with higher rates can be achieved. The use of SI at the transmitter was first introduced by Shannon \cite{shannon1958channels} for the single-user point-to-point (P2P) communication systems and then studied by Jafar \cite{jafar2006capacity} in multi-user communication systems in order to determine a general capacity region of MAC. By exploiting a random binning technique, Philosof-Zamir \cite{philosof2008technical} extended Jafar's model and provided new achievable rate regions for the  MAC with non-causally known SI at the encoders. Later, they investigated a two-user Gaussian MAC with SI at both transmitters (i.e., doubly dirty MAC) for the strong interference regime, on which the achievable rate regions suffer from a bottleneck effect dominated by the weaker user compared with the case of a clean MAC (i.e., without interference) \cite{philosof2011lattice}.

In wireless fading channels, the channel coefficients, and hence, the corresponding channel signal-to-noise ratios (SNRs) are random variables (RVs) the probability distributions of which have been studied extensively in the literature \cite{tse2005fundamentals,atapattu2011mixture,selim2015modeling}. In performance analysis of these channels such as MAC, it is typically ignored the dependence structures of the fading coefficients for the sake of tractability \cite{biglieri2016impact}, while the channel coefficients observed by each user are practically correlated. Such correlation heavily depends on the proximity of transmitters, the presence or absence of scatters around the receiver, and the physical environment. Therefore, one of the main challenges in the performance analysis of these correlated channels is to generate the multivariate distributions representing the unknown joint statistics of different fading coefficients. In other words, the need for a simple statistical procedure to describe the arbitrary dependence structures between random fading coefficients is inevitable. In this regard, one flexible approach to describe the dependence structures is to exploit the copula theory \cite{nelsen2007introduction}, which has been recently used in the performance analysis of wireless communication systems \cite{gholizadeh2015capacity,ghadi2020copula,9464253,jorswieck2020copula,besser2020copula,ghadi2020copula1,besser2020bounds,ghadi2022capacity}. In \cite{gholizadeh2015capacity}, the capacity of multiple-input multiple-output (MIMO) P2P channel with correlated Nakagami-$m$ coefficients was studied. The authors in \cite{ghadi2020copula} derived a closed-form expression for the outage probability (OP) over correlated Rayleigh fading clean MAC, bringing out the constructive effect of a negative dependence between fading channels in the system performance. In contrast, the authors in \cite{9464253} represented that considering SI at transmitters can improve the efficiency of correlated MAC under the positive dependence structure in terms of the OP. General bounds of the ergodic achievable rate and the OP for dependent slow-fading clean MAC were provided in \cite{jorswieck2020copula} and \cite{besser2020copula}, respectively. Besides, assuming correlated fading coefficients, the authors in \cite{ghadi2020copula1} obtained closed-form expressions for the secrecy metrics in secure communications, while the general bounds of the secrecy OP for dependent fading channels were derived in \cite{besser2020bounds}. Only recently, the concept of copula was also applied in \cite{ghadi2022capacity} to express the impact of fading correlation on the performance of backscatter communications.

Another important challenge in analyzing wireless communications is to accurate modeling of the statistical characteristics of propagation environments. In this regard, the Fisher-Snedecor $\mathcal{F}$ distribution has been recently introduced in \cite{yoo2017fisher} to correctly model the combined effects of shadowing and multipath fading in device-to-device (D2D) communications. Given the experimental channel data obtained therein for D2D and wearable communication links, especially at 5.8 GHz, it was reported that the Fisher-Snedecor $\mathcal{F}$ provides a better fit in most cases compared with the Generalized-$\mathcal{K}$ fading model. For instance, the authors in \cite{yoo2017fisher} showed that the Fisher-Snedecor $\mathcal{F}$ model provides a better tail matching of the empirical cumulative density function (CDF) for composite fading compared with the Generalized-$\mathcal{K}$ model. Thus, regarding the fact that the tail of the empirical CDF indicates deep fading, the proposed Fisher-Snedecor $\mathcal{F}$ model is more appropriate for practical applications of fading channel modeling in wireless networks. Another advantage of Fisher-Snedecor $\mathcal{F}$ model is that its probability density function (PDF) includes only elementary functions with respect to the RV, and is as such expected to lead to more tractable analysis than the Generalized-$\mathcal{K}$ model. Besides, the Fisher-Snedecor $\mathcal{F}$ distribution can be reduced to some specific fading models such as Nakagami-$m$, Rayleigh, and one-sided Gaussian distributions. Therefore, due to its statistical tractability, the performance analysis of different communications systems under Fisher-Snedecor $\mathcal{F}$ fading has been recently studied  \cite{badarneh2018sum,kong2018physical,badarneh2018secrecy,yoo2019comprehensive,zhao2019ergodic,cheng2020bivariate,zhang2020dual,makarfi2020physical,badarneh2020achievable,han2021secrecy}. Importantly, 
the closed-form expressions of the PDF and CDF for the sum of independent Fisher-Snedecor $\mathcal{F}$ RVs in terms of the multivariate Fox's $H$-function were derived in \cite{du2020sum}, and the exact analytical expressions of the PDF and the CDF for both product and ratio product of independent Fisher-Snedecor $\mathcal{F}$ RVs were obtained in \cite{badarneh2020product}. In most previous works, either the independent structure is applied or the classic linear correlation with an asymptotic formulation is considered between the fading channels. However, with all the above considerations, several practical questions in multi-user communications systems remain unanswered to date: ($i$) What is the effect of \textit{correlated} Fisher-Snedecor $\mathcal{F}$ fading coefficients on the performance of multi-user communications? ($ii$) How does fading severity affect the performance multi-user communications such as MAC? To the best of our knowledge, there has been no previous work utilizing copula theory to investigate the effect of channel correlation on the performance of multi-user communications systems under correlated Fisher-Snedecor $\mathcal{F}$ distributions. Motivated by the aforesaid observations, in this paper, we combine copula theory with conventional statistical techniques to analyze the performance of the MAC with and without SI, on which any arbitrary dependence pattern can be considered.  The main advantage of this approach is that it allows us to consider arbitrary dependence structures that go beyond the linear dependence. Specifically, the main contributions of our work are summarized as follows:

\textbullet \,We provide the general formulations for the OP and the average capacity (AC) in both correlated/independent Fisher-Snedecor $\mathcal{F}$ fading clean and doubly MAC models, which hold for any \textit{arbitrary} choice of copulas.

\textbullet \,We exemplify how the OP and the AC performances can be characterized in the analytical expression for the \textit{Clayton} copula under the positive dependence structure.

\textbullet \,We analyze the impact fading correlation on the performance of clean and doubly dirty MAC models in terms of the OP and the AC, by changing the copula dependence parameter within the defined range. We also examine the effect of fading severity on the efficiency of OP and AC in the proposed model.

The rest of this paper is organized as follows. Section \ref{system-model} describes the system model considered in our work. The concept of copula and the analytical expressions of the OP for both correlated/independent clear and doubly dirty MAC models are provided in section \ref{outage-probability}. The AC analysis is represented in section \ref{sec-cap} for the studied models. In section \ref{num-results}, the efficiency of analytical results is illustrated numerically. Finally, the conclusions are drawn in section \ref{conclusion}.
	\section{System Model}\label{system-model}
	\subsection{Clean MAC}
	We consider a two-user wireless MAC with independent sources, where the transmitters $t_i$ send the inputs $X_i$, $i\in\{1,2\}$ reliably to a common receiver $r$, respectively (see Fig. \ref{system-mac}). The inputs $X_i$ sent by transmitters $t_i$ over the channels are subjected to the average power constraint as $\mathbb{E}[|X_i|^2]\leq P_i$, respectively. It is assumed that all users and the receiver are single antenna based, so, the corresponding channel output $Y$ at the receiver $r$ is defined as $Y=\sum_{i=1}^{2}h_iX_i+Z$,
	where $Z$ denotes the additive white Gaussian noise (AWGN) with zero mean and variance $N$ (i.e., $Z\sim\mathcal{N}(0,N)$) at the receiver $r$, and $h_i$ are the corresponding fading channel coefficients, which are modeled by the Fisher-Snedecor $\mathcal{F}$ distribution with fading parameters $(m_{i,s},m_i)$ so that $m_{i,s}$ and $m_i$ represent the amount of shadowing of the root-mean-square (rms) signal power and the fading severity parameter, respectively.
		In a two-user block fading clean MAC with the coherent receiver (fading coefficients $h_i$ are known at the receiver $r$), the instantaneous capacity region is determined as follows \cite{tse2005fundamentals}:
		\begin{align}\nonumber
			&R_i\leq\frac{1}{2}\log_2\left(1+\frac{P_i|h_i|^2}{N}\right)\\
			&R_1+R_2\leq\frac{1}{2}\log_2\left(1+\frac{P_1|h_1|^2+P_2|h_2|^2}{N}\right),
		\end{align}
		where $R_i$ are the desired transmission rates for transmitters $t_i$, respectively.\vspace{0ex}
				\begin{figure}[t]
	\centering
	\subfigure[Clean MAC]{%
		\includegraphics[width=0.45\textwidth]{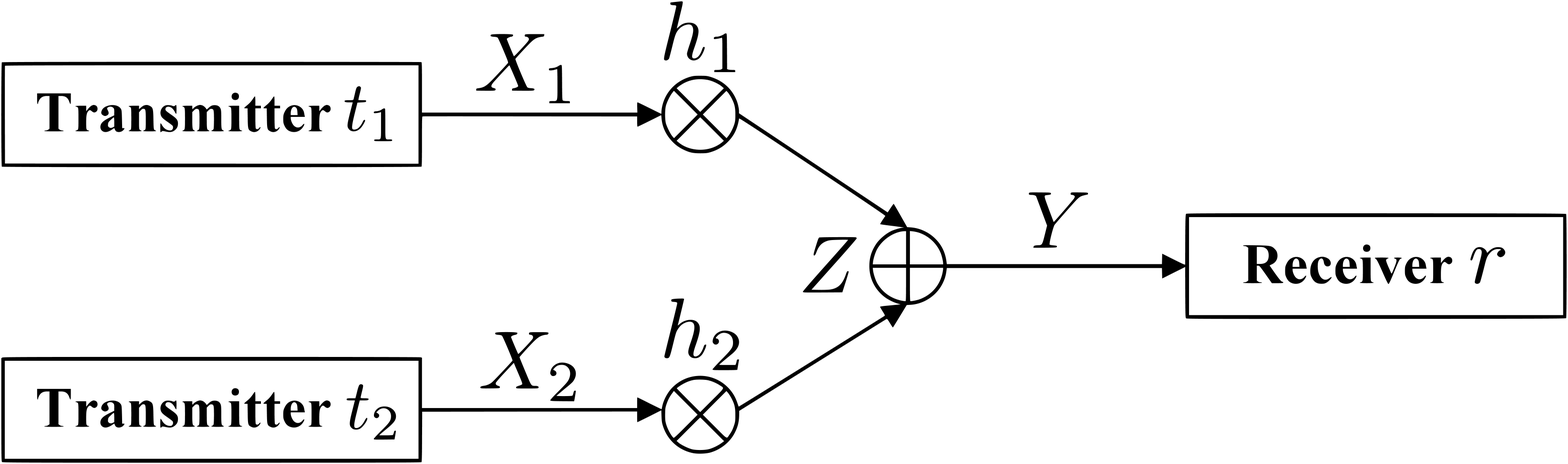}%
		\label{system-mac}%
	}\\\hspace{0cm}
	\subfigure[Doubly dirty MAC]{%
		\includegraphics[width=0.45\textwidth]{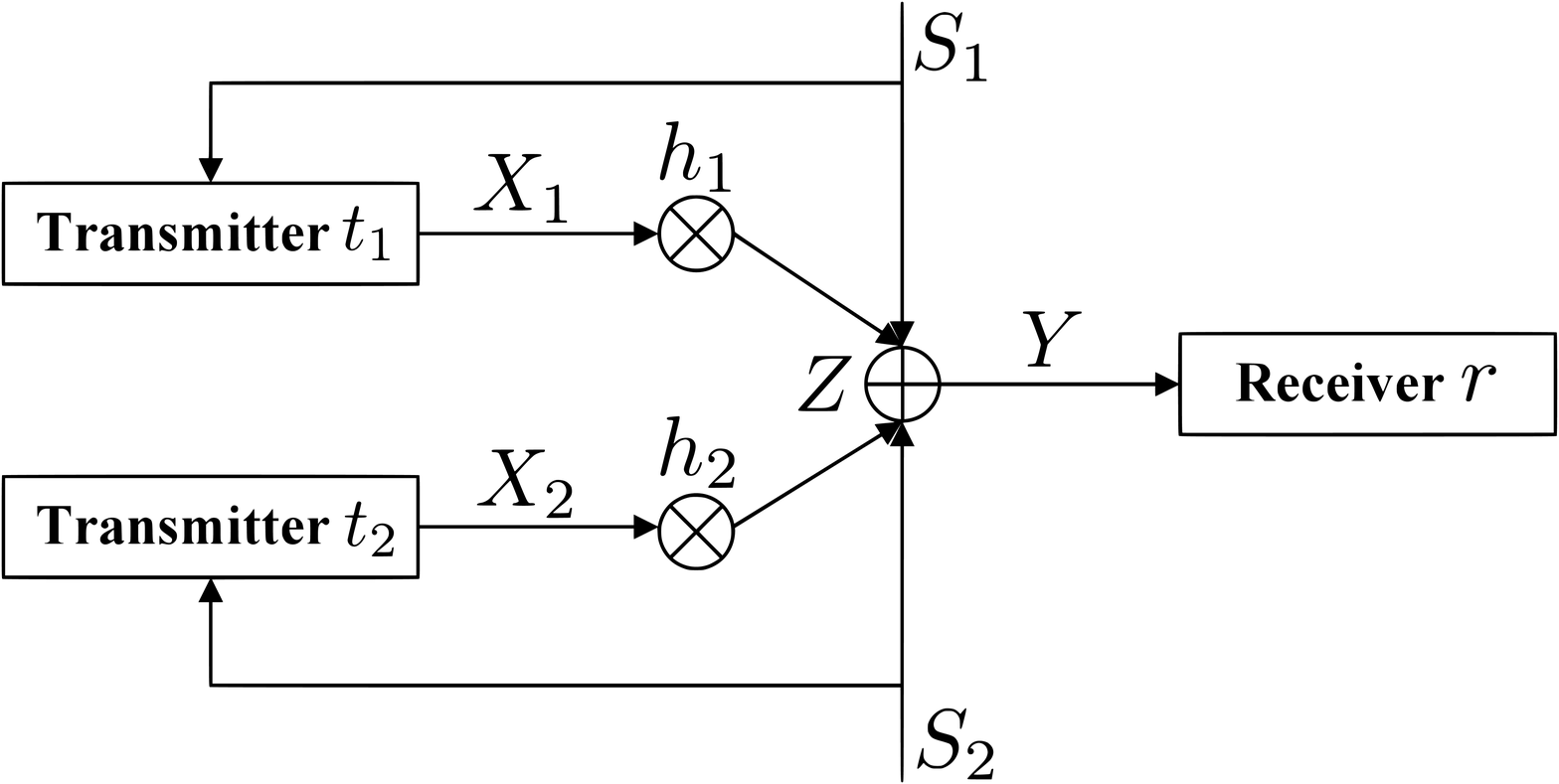}%
		\label{system-side}%
}\caption{System model depicting the multiple access communications: (a) Clean MAC; (b) Doubly dirty MAC.}\label{system}
	\end{figure}
	\subsection{Doubly dirty MAC}
	Here, we consider a two-user wireless doubly dirty MAC \cite{philosof2011lattice}, where two non-causally known interferences $S_i$ with variances $Q_i$ ($S_i\sim \mathcal{N}(0,Q_i)$) are added to transmitters $t_i$ in the clean MAC, respectively (see Fig. \ref{system-side}). Therefore, the corresponding channel output $Y$ at the receiver $r$ is defined as $Y=\sum_{i=1}^{2}\left(h_iX_i+S_i\right)+Z$.
		In a block fading doubly dirty MAC with the coherent receiver and two independent interferences $S_i$ non-causally known at transmitters $t_i$, the instantaneous capacity region is determined as follows as long as the interferences $S_i$ are strong (i.e., $Q_i \rightarrow\infty$)\cite{philosof2011lattice}
		\begin{align}
			R_1+R_2\leq \frac{1}{2}\log_2\left(1+\min\left\{\frac{P_1|h_1|^2}{N},\frac{P_2|h_2|^2}{ N}\right\}\right).\label{cap-si}
		\end{align}\vspace{-0.9cm}
	\subsection{SNR Distribution}
	For the concerned system models, the instantaneous SNR at the transmitters $t_i$ is defined as $\gamma_i=\frac{P_ig_i}{N}=\bar{\gamma}_ig_i$,
 where $g_i=|h_i|^2$ represents the instantaneous channel power gain with unit mean. The PDF and the CDF for the instantaneous SNRs $\gamma_i$ in terms of the Meijer’s G-function are respectively defined as \cite{yoo2017fisher,gradshteyn2014table}:
 \begin{align} f_i(\gamma_i)=\mathcal{A}_iG_{1,1}^{1,1}\left(
 \begin{array}{c}
 	\lambda_i\gamma_i\\
 \end{array}\Bigg\vert
 \begin{array}{c}
 	-m_{i,s}\\
 	m_i-1\\
 \end{array}\right),\label{pdf1}
\end{align}
\begin{align}
F_i(\gamma_i)=\mathcal{B}_iG_{2,2}^{1,2}\left(\begin{array}{c}
	\lambda_i\gamma_i\end{array}
\Bigg\vert\begin{array}{c}
	(1-m_{i,s},1)\\
	(m_i,0)\\
\end{array}\right),\label{cdf1}
\end{align}
%
	where $\lambda_i=\frac{m_i}{m_{i,s}\bar{\gamma}_i}$, $\mathcal{A}_{i}=\frac{\lambda_i}{\Gamma(m_i)\Gamma(m_{i,s})}$, and $\mathcal{B}_i=\frac{\gamma(m_i+1)}{m_i\Gamma(m_i)\Gamma(m_i,m_{i,s})\mathit{B}(m_i,m_{i,s})}$.
	\section{Outage Probability}\label{outage-probability}
	\subsection{Independent fading}
	In this section, we derive the analytical expressions of the OP under correlated/independent fading for both clean and doubly dirty MAC models.
	\subsubsection{Clean MAC}
	The  OP is  defined  as  the  probability  that  the  channel capacity is less than a certain information rate $R_t>0$. For the given clean MAC, we define the sum-rate OP as follows:
	\begin{align}
		P^{CM}_{out}&=\Pr\left(\frac{1}{2}\log_2\left(1+\gamma_1+\gamma_2\right)\leq R_t\right)\\
		&=\int_{0}^{\infty}F_1(\gamma_0)f_2({\gamma_2})d\gamma_2\label{app-out},
	\end{align}
	where $\gamma_0=2^{2R_t}-1-\gamma_2=\gamma_t-\gamma_2$ and $\gamma_t$ is the SNR threshold.
	\begin{theorem}\label{p-out-mac}
		The OP over independent Fisher-Snedecor $\mathcal{F}$ fading clean MAC is determined as \eqref{out-mac}.
		\begin{figure*}[!t]
			\small
				\setcounter{equation}{6}
			\begin{align}\label{out-mac}
				P^{CM}_{out}=-\gamma_t\mathcal{B}_1\mathcal{A}_2G_{1,0:2,3:1,1}^{0,1:2,1:1,1}\left(\begin{array}{c}
					\frac{-1}{\gamma_t\lambda_2},\frac{1}{\gamma_t\lambda_1}\end{array}
				\Bigg\vert\begin{array}{c}
					(2,1,1)\\
					-\\
				\end{array}\Bigg\vert
				\begin{array}{c}
					(1-m_1,1)\\
					(0,m_{1,s},1)\\
				\end{array}\Bigg\vert
				\begin{array}{c}
					(2-m_2)\\
					(1,1+m_{2,s})\\
				\end{array}\right).
			\end{align}\vspace{-0.5cm}
			\hrulefill
		\end{figure*}
	\end{theorem}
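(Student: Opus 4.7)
The plan is to evaluate the integral in (6) by invoking the Mellin-Barnes representations of the Meijer G-functions appearing in the CDF (5) and the PDF (4). Since $F_1(\gamma_0)=0$ whenever $\gamma_0=\gamma_t-\gamma_2<0$, I first truncate the range of integration to obtain
$$P^{CM}_{out} = \int_{0}^{\gamma_t} F_1(\gamma_t-\gamma_2)\, f_2(\gamma_2)\, d\gamma_2,$$
and then replace each Meijer G by its defining Barnes contour integral. This produces a triple integral over $\gamma_2\in[0,\gamma_t]$ and two Mellin contours $L_1,L_2$ in the complex $s_1,s_2$ planes, with integrand containing Gamma ratios in $s_1,s_2$ and the monomial factors $(\lambda_1(\gamma_t-\gamma_2))^{-s_1}(\lambda_2\gamma_2)^{-s_2}$.

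Next, I exchange the order of integration, justified by absolute convergence on an appropriate vertical strip, and evaluate the innermost $\gamma_2$-integral via the substitution $u=\gamma_2/\gamma_t$, obtaining the Beta identity
$$\int_{0}^{\gamma_t}(\gamma_t-\gamma_2)^{-s_1}\gamma_2^{-s_2}\, d\gamma_2 = \gamma_t^{\,1-s_1-s_2}\,\frac{\Gamma(1-s_1)\,\Gamma(1-s_2)}{\Gamma(2-s_1-s_2)}.$$
The prefactor $\gamma_t^{1-s_1-s_2}$ is then absorbed into the Mellin kernel: combining with the $\lambda_1^{-s_1}$ and $\lambda_2^{-s_2}$ factors produces the arguments $\tfrac{1}{\gamma_t\lambda_1}$ and $\tfrac{-1}{\gamma_t\lambda_2}$ that appear in (7), the sign on the second argument coming from the standard $s\mapsto -s$ convention used to bring a parameter to the numerator of the Mellin kernel.

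Finally, I recognise the resulting double Mellin-Barnes integral as a bivariate (Kamp\'e de F\'eriet-type) generalised Meijer G-function. The Beta factor $\Gamma(1-s_1)\Gamma(1-s_2)/\Gamma(2-s_1-s_2)$ becomes the common-parameter block, contributing the single triple $(2,1,1)$, while the Gamma ratios inherited from $F_1$ populate the block with parameters $(1-m_1,1)\,/\,(0,m_{1,s},1)$ and those from $f_2$ populate the block $(2-m_2)\,/\,(1,1+m_{2,s})$. Multiplying by the constant prefactor $\mathcal{B}_1\mathcal{A}_2\gamma_t$ and keeping the inherited sign then matches the expression in (7) term by term.

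The main obstacle is not analytical but combinatorial: the careful bookkeeping needed to match every Gamma argument — including its sign and its linear coefficients in $s_1$ and $s_2$ — to the canonical parameter triples $(a,\alpha,\beta)$ of the bivariate Meijer G-function, and to verify that the original Barnes contours separate the two families of poles so that the analytic continuation used in (7) is valid. Once the identification is made, the result follows directly from the Beta/Gamma reduction above.
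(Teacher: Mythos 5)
Your overall strategy---expand into Mellin--Barnes integrals, evaluate the inner $\gamma_2$-integral as a Beta function, and identify a bivariate Meijer $G$-function---is the same one the paper uses (the paper frames the first step via the Parseval relation for the Mellin transform rather than a direct double Barnes expansion, but that is cosmetic). The gap is in your final identification step, and it is not merely bookkeeping. Your truncation to $[0,\gamma_t]$ yields the genuine Euler Beta integral
$\gamma_t^{\,1-s_1-s_2}\,\Gamma(1-s_1)\Gamma(1-s_2)/\Gamma(2-s_1-s_2)$,
so (after the $\Gamma(1-s_1)$ cancels against the $1/\Gamma(1-\zeta)$ already present in the kernel of $F_1$) your double contour integral carries the joint factor $1/\Gamma(2-s_1-s_2)$ in the \emph{denominator} and has both arguments $1/(\gamma_t\lambda_1)$, $1/(\gamma_t\lambda_2)$ \emph{positive}. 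The expression \eqref{out-mac}, by contrast, has the upper common-parameter triple $(2,1,1)$ with an empty lower row, i.e.\ the joint factor $\Gamma(s_1+s_2-1)$ in the \emph{numerator}, together with the extra $1/\Gamma(\zeta)$ and the negative argument $-1/(\gamma_t\lambda_2)$. The two kernels differ by a ratio of sines (via the reflection formula), so they are not the same Mellin--Barnes integrand, and the ``term by term'' match you assert does not hold. Your attribution of the minus sign to ``the standard $s\mapsto-s$ convention'' is also not where it comes from.

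What the paper actually does is evaluate the inner integral over the \emph{full} half-line $\int_0^\infty \gamma_2^{-s}(\gamma_t-\gamma_2)^{-\zeta}\,d\gamma_2$ by a formal application of \cite[eq.~(3.194.3)]{gradshteyn2014table} with $\beta=-1/\gamma_t$, which returns $\gamma_t^{-\zeta}\,B(1-s,\zeta+s-1)\,(-1/\gamma_t)^{s-1}$; this is precisely what produces $\Gamma(\zeta+s-1)/\Gamma(\zeta)$ as the common numerator block and the negative argument in \eqref{out-mac}. So either you must reproduce that (analytically continued) evaluation, or you must show by an explicit residue/reflection argument that your positive-argument representation equals the right-hand side of \eqref{out-mac}; as written, the proof stops one genuine step short of the stated formula. (Your truncated version is arguably the more rigorous starting point, but it leads to a differently parameterized bivariate $G$-function, not to \eqref{out-mac} as displayed.)
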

	\begin{proof}
		The details of the proof are in Appendix \ref{app-out-mac}.
	\end{proof}
	\subsubsection{Doubly dirty MAC}
	Similarly, the OP for the considered doubly dirty MAC can be defined as:
	\begin{align}
		P^{DM}_{out}&=\Pr\left(\frac{1}{2}\log_2\big(1+\min\left\{\gamma_1,\gamma_2\right\}\big)\leq R_t\right),\\
		&=F_1(\gamma_t)+F_2(\gamma_t)-F_1(\gamma_t)F_2(\gamma_t)\label{proof-side}.
	\end{align}
	\begin{theorem}\label{p-out-mac-side}
		The OP over independent Fisher-Snedecor $\mathcal{F}$ fading doubly dirty MAC is determined as \eqref{out-mac-side}.
		\begin{figure*}[!t]
			\small
				\setcounter{equation}{9}
			\begin{align}\nonumber
			 P^{DM}_{out}=\,&\mathcal{B}_1G_{2,2}^{1,2}\left(\begin{array}{c}
					\lambda_1\gamma_t\end{array}
				\Bigg\vert\begin{array}{c}
					(1-m_{1,s},1)\\
					(m_1,0)\\
				\end{array}\right)+\mathcal{B}_2G_{2,2}^{1,2}\left(\begin{array}{c}
					\lambda_2\gamma_t\end{array}
				\Bigg\vert\begin{array}{c}
					(1-m_{2,s},1)\\
					(m_2,0)\\
				\end{array}\right)\\
			&-\mathcal{B}_1\mathcal{B}_2G_{2,2}^{1,2}\left(\begin{array}{c}
				\lambda_1\gamma_t\end{array}
			\Bigg\vert\begin{array}{c}
				(1-m_{1,s},1)\\
				(m_1,0)\\
			\end{array}\right)G_{2,2}^{1,2}\left(\begin{array}{c}
				\lambda_2\gamma_t\end{array}
			\Bigg\vert\begin{array}{c}
				(1-m_{2,s},1)\\
				(m_2,0)\\
			\end{array}\right)\label{out-mac-side}.
			\end{align}
			\hrulefill\vspace{-0.7cm}
		\end{figure*}
	\end{theorem}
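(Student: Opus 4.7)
The plan is to reduce the outage event to a simple inequality on $\min(\gamma_1,\gamma_2)$ and then apply independence together with the known Fisher-Snedecor $\mathcal{F}$ CDF in~\eqref{cdf1}. Starting from the capacity expression~\eqref{cap-si}, I would first observe that the event
\[
\tfrac{1}{2}\log_2\!\bigl(1+\min\{\gamma_1,\gamma_2\}\bigr)\le R_t
\]
is equivalent to $\min\{\gamma_1,\gamma_2\}\le \gamma_t$, with $\gamma_t=2^{2R_t}-1$ as defined in Section~\ref{outage-probability}. This step is purely algebraic and requires no distributional assumptions.

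Next, I would use the standard complementation for the minimum of two random variables: $\Pr(\min\{\gamma_1,\gamma_2\}\le \gamma_t)=1-\Pr(\gamma_1>\gamma_t,\,\gamma_2>\gamma_t)$. Under the independence assumption of Section~\ref{outage-probability}, the joint survival factorizes as $(1-F_1(\gamma_t))(1-F_2(\gamma_t))$, and expanding gives $F_1(\gamma_t)+F_2(\gamma_t)-F_1(\gamma_t)F_2(\gamma_t)$, which is exactly~\eqref{proof-side}. This is the key structural identity the theorem exploits.

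Finally, I would simply substitute the Fisher-Snedecor $\mathcal{F}$ CDF from~\eqref{cdf1} into each of the three terms $F_1(\gamma_t)$, $F_2(\gamma_t)$, and $F_1(\gamma_t)F_2(\gamma_t)$, using the respective constants $\mathcal{B}_i$ and $\lambda_i$ defined just after~\eqref{cdf1}. This substitution yields the Meijer-$G$ representation displayed in~\eqref{out-mac-side} term-by-term, with the product term producing the bivariate factor of two Meijer-$G$ functions.

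Unlike Theorem~\ref{p-out-mac}, there is no genuine obstacle here: no convolution integral needs to be evaluated because the outage event depends on $\gamma_1$ and $\gamma_2$ only through their minimum rather than their sum. The only care needed is to correctly match the $\mathcal{B}_i$ constants with their parameters $(m_i,m_{i,s},\bar\gamma_i)$ when forming the product term, so that no parameters are mistakenly shared between the two Meijer-$G$ factors.
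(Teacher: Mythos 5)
Your proposal is correct and follows essentially the same route as the paper: the paper likewise derives $P^{DM}_{out}=F_1(\gamma_t)+F_2(\gamma_t)-F_1(\gamma_t)F_2(\gamma_t)$ from the minimum-SNR outage event under independence (their Eq.~\eqref{proof-side}) and then substitutes the Meijer-$G$ form of the CDF from~\eqref{cdf1} term by term. Your extra step of spelling out the survival-function factorization $\Pr(\min\{\gamma_1,\gamma_2\}\le\gamma_t)=1-(1-F_1(\gamma_t))(1-F_2(\gamma_t))$ only makes explicit what the paper leaves implicit.
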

	\begin{proof}
		By substituting the  $F_i(\gamma_t)=\mathcal{B}_iG_{2,2}^{1,2}\left(\begin{array}{c}
			\lambda_i\gamma_t\end{array}
		\Bigg\vert\begin{array}{c}
			(1-m_{i,s},1)\\
			(m_i,0)\\
		\end{array}\right)$ for $i\in\{1,2\}$ into \eqref{proof-side}, the proof is completed. 
	\end{proof}\vspace{-0.5cm}
\subsection{Correlated fading}
First, we briefly review some basic definitions and properties of the two-dimensional copulas \cite{nelsen2007introduction}.
\begin{definition}[Copula]
	Let $\vec{V}=(V_1,V_2)$ be a vector of two RVs with marginal CDFs $F(v_j)=\Pr(V_j\leq v_j)$ for $j=1,2$, respectively. The relevant bivariate CDF is defined as:
	\begin{align}
			F(v_1,v_2)=\Pr(V_1\leq v_1,V_2\leq v_2).
	\end{align}
	Then, the copula function $C(u_1,u_2)$ of $\vec{V}=(V_1,V_2)$ defined on the unit hypercube $[0,1]^2$ with uniformly distributed RVs $U_j =F(v_j)$ for $j=1,2$ over $[0,1]$ is given by
	\begin{align}
		C(u_1,u_2)=\Pr(U_1\leq u_1,U_2\leq u_2).
	\end{align}
\end{definition}
\begin{theorem}[Sklar's theorem]\label{thm-sklar}
	Let $F(v_1,v_2)$ be a joint CDF of RVs with margins $F(v_j)$ for $j=1,2$. Then, there exists one copula function $C$ such that for all $v_j$ in the extended real line domain $\mathbb{R}$,
	\vspace{-1ex}
	\begin{align}\label{sklar}
		F(v_1,v_2)=C\left(F(v_1),F(v_2)\right).
	\end{align}
\end{theorem}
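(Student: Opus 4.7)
The plan is to construct the copula $C$ explicitly from the joint CDF using the probability integral transform, and then verify that the identity \eqref{sklar} holds. First I would introduce the generalized inverse (quantile function) of each marginal, defined by $F^{-1}(u_j)=\inf\{v_j\in\mathbb{R}:F(v_j)\geq u_j\}$ for $u_j\in[0,1]$. The candidate copula is then
\begin{align}
C(u_1,u_2)=F\!\left(F^{-1}(u_1),F^{-1}(u_2)\right),\quad (u_1,u_2)\in[0,1]^2.
\end{align}
The key fact underlying this choice is the probability integral transform: whenever a marginal $F(v_j)$ is continuous, the random variable $U_j=F(V_j)$ is uniformly distributed on $[0,1]$, and moreover $F^{-1}(F(v_j))=v_j$ almost surely. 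I would first settle the continuous-marginal case since it is the cleanest.

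In the continuous case, I would substitute $u_j=F(v_j)$ into the defining formula and use the identity above to obtain $C(F(v_1),F(v_2))=F(F^{-1}(F(v_1)),F^{-1}(F(v_2)))=F(v_1,v_2)$, which is exactly \eqref{sklar}. To verify that $C$ is indeed a copula, I would check the three defining properties: groundedness $C(u_1,0)=C(0,u_2)=0$, the uniform marginals $C(u_1,1)=u_1$ and $C(1,u_2)=u_2$ (both of which follow from the limits of $F$ at $\pm\infty$ and the probability integral transform), and the $2$-increasing property, which follows from the fact that the right-hand side is a bona fide joint CDF evaluated at prescribed points, so rectangle probabilities are nonnegative.

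For the general case, including possibly discontinuous marginals, I would extend the argument using the well-known properties of the generalized inverse, namely $F(F^{-1}(u_j))\geq u_j$ and $F^{-1}(F(v_j))\leq v_j$, together with the monotonicity of $F$ in each argument. The identity in \eqref{sklar} then holds on the product of the ranges of $F_1$ and $F_2$, and $C$ can be extended by (multi-)linear interpolation on the complement to obtain a function defined on all of $[0,1]^2$ satisfying the copula axioms. The main obstacle I anticipate is precisely this extension step: when marginals have jumps, the image $\mathrm{Range}(F_1)\times\mathrm{Range}(F_2)$ is a strict subset of $[0,1]^2$, so the copula guaranteed by the theorem is no longer unique, and some care is needed to produce a valid $2$-increasing extension. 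Since the subsequent developments in the paper invoke Sklar's theorem through continuous Fisher-Snedecor $\mathcal{F}$ marginals, the continuous-marginal version suffices for the applications, and the uniqueness of $C$ on the image of the marginals makes the construction canonical in that setting.
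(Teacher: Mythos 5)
The paper does not actually prove this statement: Sklar's theorem is quoted as a classical result from the copula literature, so there is no in-paper argument to compare yours against. Your construction $C(u_1,u_2)=F\left(F^{-1}(u_1),F^{-1}(u_2)\right)$ via the quantile transform is the standard textbook proof, and the overall architecture---verify the copula axioms, substitute $u_j=F(v_j)$, treat the continuous-margin case first, and handle discontinuous margins by bilinear interpolation off $\mathrm{Range}(F_1)\times\mathrm{Range}(F_2)$---is sound. You are also right that only the continuous case is needed downstream: the Fisher-Snedecor $\mathcal{F}$ marginals in \eqref{cdf1} are continuous and strictly increasing on $(0,\infty)$, which in addition gives uniqueness of $C$ there.

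One step needs more than you have written. The theorem asserts \eqref{sklar} for \emph{all} $v_j$, but your substitution argument relies on $F^{-1}(F(v_1))=v_1$, which fails whenever the marginal is constant on an interval containing $v_1$ away from its left endpoint; saying this holds ``almost surely'' does not close the gap, because the claim is a pointwise identity in deterministic arguments, not an a.s.\ statement about the random variables, and the exceptional set can be nonempty even for continuous margins. The missing lemma is the $1$-Lipschitz property of a joint CDF with respect to its margins,
\begin{align}
\left|F(v_1,v_2)-F(v_1',v_2)\right|\leq\left|F_1(v_1)-F_1(v_1')\right|,
\end{align}
which follows from $\Pr\left(v_1'<V_1\leq v_1,\,V_2\leq v_2\right)\leq\Pr\left(v_1'<V_1\leq v_1\right)$. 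It shows that $F(\cdot,v_2)$ is constant wherever $F_1$ is, hence $F\left(F^{-1}(F(v_1)),v_2\right)=F(v_1,v_2)$ even on flat pieces, and the identity holds everywhere. The same lemma is what makes your bilinear-interpolation extension well defined and $2$-increasing in the discontinuous case, so it should be stated explicitly rather than left implicit.
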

Since the distribution of the SNR for clean MAC is that of the sum of two correlated RVs, we exploit the following theorem to determine the CDF of $\gamma_s=\sum_{i=1}^{2}\gamma_i$. 
\begin{theorem}\label{thm-sum}
Let  $\textbf{V}=(V_1,V_2)$ be a vector of two absolutely continuous RVs with marginal CDFs $F_j(v_j)$, marginal PDFs $f_j(v_j)$ for $j=1,2$, and the corresponding copula $C$. Thus, the CDF of $\Omega=V_1+V_2$ is:
\begin{align}\label{eq-cdf}
F_\Omega(\omega)=\int_{\left[0,1\right]}C^{\prime}\left(u_1,F_2\left(\omega-F_1^{-1}(u_1)\right)\right)d\kappa,
\end{align}
where $\kappa$ generally denotes the Lebesgue measure, $C^\prime=\frac{\partial C}{\partial u_1}$ is the first partial derivative of a copula $C$, and $F^{-1}_{1}(.)$ is an inverse function of $F_{1}(.)$.\vspace{-1ex}
\end{theorem}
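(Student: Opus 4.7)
The plan is to express the CDF of $\Omega=V_1+V_2$ via a conditioning argument and then invoke the standard identity that the partial derivative of a copula with respect to its first argument is the copula of the conditional distribution. This reduces the whole statement to a single change of variables.

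First I would condition on $V_1$ and write
\begin{align*}
F_\Omega(\omega)=\Pr(V_1+V_2\le \omega)=\int_{-\infty}^{\infty}\Pr(V_2\le \omega-v_1\mid V_1=v_1)\,f_1(v_1)\,dv_1.
\end{align*}
The inner factor is the conditional CDF $F_{V_2\mid V_1}(\,\cdot\,\mid v_1)$, so everything rests on rewriting it in copula form.

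Next, using Sklar's theorem (Theorem~\ref{thm-sklar}) we have $F(v_1,v_2)=C(F_1(v_1),F_2(v_2))$. Differentiating this joint CDF with respect to $v_1$ and comparing with $\int_{-\infty}^{v_2}f(v_1,t)\,dt=f_1(v_1)\,F_{V_2\mid V_1}(v_2\mid v_1)$ yields the well-known identity
\begin{align*}
F_{V_2\mid V_1}(v_2\mid v_1)=\left.\frac{\partial C(u_1,u_2)}{\partial u_1}\right|_{u_1=F_1(v_1),\,u_2=F_2(v_2)}=C'\bigl(F_1(v_1),F_2(v_2)\bigr).
\end{align*}
I would state this as a short lemma (or cite it from the copula literature) because it is the only nontrivial ingredient; the rest is bookkeeping.

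Substituting back gives
\begin{align*}
F_\Omega(\omega)=\int_{-\infty}^{\infty}C'\bigl(F_1(v_1),F_2(\omega-v_1)\bigr)\,f_1(v_1)\,dv_1,
\end{align*}
and the final step is the probability integral transform $u_1=F_1(v_1)$, so that $du_1=f_1(v_1)\,dv_1$, $v_1=F_1^{-1}(u_1)$, and the limits collapse to $[0,1]$; this delivers precisely \eqref{eq-cdf} with $d\kappa$ being Lebesgue measure on the unit interval.

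The main obstacle is essentially cosmetic: one must justify the interchange of differentiation and integration leading to $C'$, which requires absolute continuity of the marginals (assumed) and measurability of $C'$ (which holds since $C$ is Lipschitz in each argument, hence $\partial C/\partial u_1$ exists a.e.\ and is bounded in $[0,1]$). Care is also needed if $F_1$ is not strictly increasing, in which case $F_1^{-1}$ should be interpreted as the generalized (quantile) inverse; the change of variables then goes through by the standard pushforward argument for the uniform distribution of $F_1(V_1)$.
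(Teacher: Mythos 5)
Your proof is correct. It reaches the same formula as the paper but packages the argument differently: you condition on $V_1$ and invoke the standard identity $F_{V_2\mid V_1}(v_2\mid v_1)=\partial_1 C\bigl(F_1(v_1),F_2(v_2)\bigr)$, then finish with the probability integral transform $u_1=F_1(v_1)$. The paper instead integrates the copula density over the image $\mathcal{H}_\omega$ of the half-plane $\{v_1+v_2\le\omega\}$ under $(v_1,v_2)\mapsto(F_1(v_1),F_2(v_2))$ and applies Fubini, integrating out $u_2$ up to the boundary curve $u_2=\tau_\omega(u_1)=F_2(\omega-F_1^{-1}(u_1))$; that inner integral $\int_0^{\tau_\omega(u_1)}c(u_1,u_2)\,du_2=C'(u_1,\tau_\omega(u_1))$ is exactly your conditional-CDF identity in disguise, so the two derivations are computationally equivalent. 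What your version buys is a more elementary and self-contained presentation that avoids the measure-theoretic apparatus ($\mu_C$, the map $\phi$, the boundary sets $\mathcal{G}^*_\omega,\mathcal{H}^*_\omega$) and the external citation; what the paper's version buys is a geometric picture that generalizes directly to other aggregation functions beyond the sum (the cited reference treats that general case). Your closing remarks on measurability of $C'$ (Lipschitz continuity of $C$ in each argument) and on using the generalized quantile inverse when $F_1$ is not strictly increasing are appropriate and, if anything, more careful than the paper's own treatment.
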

\begin{proof} By considering the convex set $\mathcal{G}_\omega=\{x_1+x_2\in\mathbb{R}^2:x_1+x_2\leq \omega\}$, $F_{\Omega}(\omega)$ can be formulated in terms of the copula measure $\mu_C$ as follows:
\begin{align}
&F_{\Omega}(\omega)=\Pr\left(V_1+V_2<\omega\right)=\int_{\mathcal{G}_\omega}f(v_1,v_2)dv_1dv_2,\\
&=\int_{\mathcal{G}_{\omega}}c\left(F_1(v_1),F_2(v_2)\right)f_1(v_1)f_2(v_2)dv_1dv_2,\\
&=\int_{\mathcal{H}_\omega}c\left(u_1,u_2\right)du_1du_2=\mu_C(\mathcal{H}_\omega)=\int_{\mathcal{H}_\omega}c\,d\kappa,\label{fubini}
\end{align}
where $c(u_1,u_2)$ denotes the copula density function, $f(v_1,v_2)$ is the joint PDF of RVs $V_j$,  $\mathcal{H}_\omega=\phi(\mathcal{G}_\omega)\subseteq[0,1]^2$, and  $\phi:\mathbb{R}^2\rightarrow[0,1]^2$. Denote further by $\mathcal{G}^*_{\omega}$ and $\mathcal{H}^*_{\omega}$ as the linear boundaries that respectively connected to $\mathcal{G}_{\omega}$ and $\mathcal{H}_{\omega}$, the set $\mathcal{H}^*_{\omega}=\phi(\mathcal{G}^*_{\omega})$ is given by $\mathcal{H}^*_{\omega}=\{(u_1,u_2)\in[0,1]^2,u_2=\tau_\omega(u_1)\}$, where $\tau_\omega(u_1)=F_2\left(\omega-F_1^{-1}(u_1)\right)$. Then, utilizing the Fubini theorem on \eqref{fubini}, the proof is completed. The details can be found in \cite{gijbels2014distribution} for a general case. 
\end{proof}
By taking the derivative with respect to $\omega$ in Theorem \ref{thm-sum},  the joint PDF for the sum of two correlated RVs can be obtained.
\begin{corollary}
By considering the assumptions of Theorem \ref{thm-sum}, the PDF of $\Omega=V_1+V_2$ is determined  as:
\begin{align}\nonumber
f_\Omega(\omega)=&\int_{\left[0,1\right]}c\left(u_1,F_2\left(\omega-F_1^{-1}(u_1)\right)\right)\\
&\times f_2\left(\omega-F_1^{-1}(u_1)\right)d\kappa.\label{pdf}
\end{align}
\end{corollary}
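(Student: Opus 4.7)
The plan is to differentiate the CDF formula \eqref{eq-cdf} of Theorem~\ref{thm-sum} with respect to $\omega$, interchanging the derivative with the integral, and then applying the chain rule to the integrand. Since the integration domain $[0,1]$ is independent of $\omega$, the interchange is the only delicate step; the rest reduces to a single chain-rule computation.

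Writing $\tau_\omega(u_1) := F_2\bigl(\omega-F_1^{-1}(u_1)\bigr)$ as already introduced in the proof of Theorem~\ref{thm-sum}, only the second argument of $C'$ depends on $\omega$. The chain rule then gives
\begin{align*}
\frac{\partial}{\partial\omega}\,C'\bigl(u_1,\tau_\omega(u_1)\bigr)
&= \left.\frac{\partial C'(u_1,u_2)}{\partial u_2}\right|_{u_2=\tau_\omega(u_1)}\!\cdot\frac{\partial\tau_\omega(u_1)}{\partial\omega}\\
&= c\bigl(u_1,\tau_\omega(u_1)\bigr)\, f_2\bigl(\omega-F_1^{-1}(u_1)\bigr),
\end{align*}
where I use the defining relation $c(u_1,u_2)=\partial^2 C/(\partial u_1\partial u_2)$ for the copula density and $F_2'=f_2$ for the marginal. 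Substituting this expression back into \eqref{eq-cdf} produces exactly \eqref{pdf}.

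The only real obstacle is justifying the swap $\frac{d}{d\omega}\!\int = \int\!\frac{\partial}{\partial\omega}$. I would handle it via a dominated convergence argument: absolute continuity of $\vec{V}=(V_1,V_2)$, which is the standing hypothesis of Theorem~\ref{thm-sum} inherited by the corollary, ensures that the copula density $c$ exists and is $\kappa$-integrable on $[0,1]^2$; together with local boundedness of $f_2$ in a neighborhood of $\omega$, this supplies the integrable dominant required by Leibniz's rule. Alternatively, since $F_\Omega$ is already absolutely continuous under these hypotheses, the Lebesgue differentiation theorem identifies $f_\Omega$ almost everywhere with $dF_\Omega/d\omega$, matching the displayed formula without additional pointwise regularity. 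No deeper difficulty arises beyond this standard measure-theoretic bookkeeping.
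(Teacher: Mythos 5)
Your proposal is correct and follows essentially the same route as the paper: the paper's own proof simply invokes $f_{\Omega}(\omega)=\frac{d}{d\omega}F_{\Omega}(\omega)$ applied to \eqref{eq-cdf} together with the interchange of derivative and integral, which is exactly your argument. You merely make explicit the chain-rule step identifying $\partial C'/\partial u_2$ with the copula density $c$ and the justification of the interchange, both of which the paper leaves implicit.
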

\begin{proof}
By utilizing definition $f_{\Omega}(\omega)=\frac{d}{d\omega}F_{\Omega}(\omega)$ and the allowance of interchanging integrals and
derivatives given the assumptions, the proof is completed. 
\end{proof}
\begin{definition}[Clayton copula]\label{Clayton}
		The bivariate Clayton copula with dependence parameter $\theta\in(0,\infty)$ is defined as:
		\begin{align}\label{def-cl}
		C_{\mathrm{cl}}\left(u_1,u_2\right)=\left({u_1}^{-\theta}+{u_2}^{-\theta}-1\right)^{-1/\theta},
		\end{align}
		which accepts 
		the positive dependence structure. When $\theta$ tends to 
		$+\infty$, the 
		upper bound of  Fr\'echet-Hoeffding \cite[Thm. 2.2.3]{nelsen2007introduction} will be attained (i.e., the perfect positive correlation is reached). Also, independence is achieved as $\theta$ approaches zero.
\end{definition}
\subsubsection{Clean MAC} The OP for a clean MAC under correlated fading condition can be defined as:
	\begin{align}
	P^{CM}_{out}&=\Pr\left(\frac{1}{2}\log_2\left(1+\gamma_1+\gamma_2\right)\leq R_t\right)=F_{\gamma_s}(\gamma_t),
\end{align}
where $\gamma_s=\gamma_1+\gamma_2$ and $\gamma_t=2^{2R_t}-1$.
\begin{theorem}
	The OP over correlated Fisher-Snedecor $\mathcal{F}$ fading clean MAC and any arbitrary copula $C$ is determined as: \vspace{-0.1cm}
	\begin{align}\label{thm-out-mac}
		P_{out}^{CM}=\int_{0}^{1} C^\prime\left(u_1,\tau_{\gamma_t}(u_1)\right)du_1,
	\end{align}\vspace{-0.2cm}
where, 
\begin{align}\nonumber
&\tau_{\gamma_t}(u_1)=\\ 
&I_{1-\frac{1}{\frac{m_2}{m_{2,s}} \left(\gamma_t-\frac{m_{1,s}}{m_1}\left(\frac{1}{I_{(1,-u_1)}^{-1}(m_{1,s},m_1)}-1\right)\right)+m_{2,s}}}(m_2,m_{2,s}), \label{tau}
\end{align}
and $I^{-1}_{(.,.)}(.,.)$ is the inverse of generalized regularized Beta function $I_{(.,.)}(.,.)$.
\end{theorem}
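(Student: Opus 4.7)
The plan is to apply Theorem~\ref{thm-sum} verbatim and then spend the effort on making the marginal inversion explicit. Since $P_{out}^{CM}=\Pr(\gamma_1+\gamma_2\le\gamma_t)=F_{\gamma_s}(\gamma_t)$, identifying $V_1=\gamma_1$, $V_2=\gamma_2$, $\omega=\gamma_t$ and leaving the copula $C$ arbitrary, Theorem~\ref{thm-sum} immediately produces $\int_0^1 C'(u_1,F_2(\gamma_t-F_1^{-1}(u_1)))\,du_1$, i.e.\ the exact outer form of \eqref{thm-out-mac} with the identification $\tau_{\gamma_t}(u_1)=F_2(\gamma_t-F_1^{-1}(u_1))$. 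The copula side of the claim is therefore free, and the remaining task is to make $\tau_{\gamma_t}$ explicit for the Fisher--Snedecor $\mathcal{F}$ marginals.

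The Meijer $G$-function representation of $F_i$ in \eqref{cdf1} is too opaque to invert directly, so the first preparatory step is to rewrite the Fisher--Snedecor CDF through the regularized incomplete beta function as $F_i(\gamma)=I_{\lambda_i\gamma/(\lambda_i\gamma+1)}(m_i,m_{i,s})$. This follows from the substitution $z=\lambda_i\gamma/(\lambda_i\gamma+1)$ in the PDF \eqref{pdf1}, which reduces the integrand to a standard Beta$(m_i,m_{i,s})$ density; this beta form is the one that admits a closed-form inverse.

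The main obstacle is the explicit inversion of $F_1$. Setting $u_1=I_z(m_1,m_{1,s})$ gives $z=I^{-1}_{u_1}(m_1,m_{1,s})$, and solving $z=\lambda_1\gamma/(\lambda_1\gamma+1)$ for $\gamma$ leaves $\lambda_1\gamma=z/(1-z)$. To reach the parameter ordering $(m_{1,s},m_1)$ that actually appears in \eqref{tau}, I would invoke the beta symmetry $I_z(a,b)=1-I_{1-z}(b,a)$, whose inverse reads $I^{-1}_{u_1}(m_1,m_{1,s})=1-I^{-1}_{1-u_1}(m_{1,s},m_1)$, so that $1-z=I^{-1}_{1-u_1}(m_{1,s},m_1)$ and
\begin{equation*}
F_1^{-1}(u_1)=\frac{m_{1,s}}{m_1}\!\left(\frac{1}{I^{-1}_{1-u_1}(m_{1,s},m_1)}-1\right),
\end{equation*}
under the normalisation $\bar{\gamma}_1=1$ that is implicit in the statement.

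Finally, substituting $F_1^{-1}(u_1)$ into $\tau_{\gamma_t}(u_1)=F_2(\gamma_t-F_1^{-1}(u_1))$, applying the same beta representation to $F_2$, and rewriting its argument in the $1-1/(\cdots)$ shape matches \eqref{tau} up to elementary algebra; plugging the resulting $\tau_{\gamma_t}$ back into the copula integral from Theorem~\ref{thm-sum} closes the proof. The only non-routine ingredient is the beta-function inversion step above.
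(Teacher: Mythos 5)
Your proposal is correct and takes essentially the same route as the paper's proof: both apply Theorem~\ref{thm-sum} with $\omega=\gamma_t$, rewrite the Fisher--Snedecor CDF as a regularized incomplete beta function as in \eqref{cdf-new}, invert $F_1$, and substitute into $F_2$ to obtain $\tau_{\gamma_t}$. The only difference is that you spell out the beta-symmetry step $I_z(a,b)=1-I_{1-z}(b,a)$ underlying the paper's $I^{-1}_{(1,-u_1)}(m_{1,s},m_1)$ notation and the implicit normalization $\bar{\gamma}_i=1$, both of which the paper leaves unstated.
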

\begin{proof} To prove \eqref{thm-out-mac}, we need to generate the function under the integral in \eqref{eq-cdf}. Thus, we first rewrite the CDF of SNRs in terms of the regularized Beta function as follows: 
	\begin{align}\label{cdf-new}
F_{i}(\gamma_i)=I_{\frac{m_i \gamma_i}{m_i \gamma_i+m_{i,s}}}(m_i,m_{i,s}),
\end{align}
where $F_1^{-1}(u_1)=\frac{m_{1,s}}{m_1} \left(\frac{1}{I_{(1,-u_1)}^{-1}(m_{1,s},m_1)}-1\right)$, and by doing so, 
\begin{align}\label{inv-2}
F_{2}\left(\gamma_t-F_1^{-1}(u_1)\right)=I_{1-\frac{m_{2,s}}{m_2\left(\gamma_t-F_1^{-1}(u_1)\right)+m_{2,s}}}(m_2,m_{2,s}).
\end{align}
Now, by inserting $F_1^{-1}(u_1)$ into \eqref{inv-2} 
and exploiting Theorem \ref{thm-sum}, the proof is completed.
\end{proof}
Here, we use the Clayton copula to analyze the performance of the OP. This choice is justified because it captures the positive dependence between RVs for any range of correlation and covers the tail dependence, 
	while offering good mathematical tractability.
\begin{corollary}
The OP over correlated Fisher-Snedecor $\mathcal{F}$ fading clean MAC, using the Clayton copula is given by
\begin{align}
	P_{out}^{CM}=\int_{0}^{1}
	-u_1^{{\theta}-1}\left(u_1^{\theta}+\tau_{\gamma_t}(u_1)^{\theta}-1\right)^{-\frac{1}{{\theta}}-1}du_1.\label{col-out-mac}
\end{align}
where $\tau_{\gamma_t}(u_1)$ is defined in \eqref{tau}.
\end{corollary}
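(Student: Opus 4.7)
The plan is to specialize the general formula $P_{out}^{CM}=\int_{0}^{1}C^\prime(u_1,\tau_{\gamma_t}(u_1))\,du_1$ from the preceding theorem to the particular case in which $C$ is the bivariate Clayton copula. Since $\tau_{\gamma_t}(u_1)$ was already identified in \eqref{tau} and does not depend on the choice of copula, the whole problem reduces to evaluating the first partial derivative $C^\prime=\partial C/\partial u_1$ for the Clayton family and plugging it into the integrand.

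First I would recall from Definition \ref{Clayton} that $C_{\mathrm{cl}}(u_1,u_2)=\bigl(u_1^{-\theta}+u_2^{-\theta}-1\bigr)^{-1/\theta}$ for $\theta\in(0,\infty)$. Next, I would differentiate this expression with respect to $u_1$ using the chain rule. Writing $A(u_1,u_2)=u_1^{-\theta}+u_2^{-\theta}-1$, one gets
\begin{align*}
\frac{\partial C_{\mathrm{cl}}}{\partial u_1}
&=-\tfrac{1}{\theta}\,A(u_1,u_2)^{-\tfrac{1}{\theta}-1}\cdot\frac{\partial A}{\partial u_1}\\
&=-\tfrac{1}{\theta}\,A(u_1,u_2)^{-\tfrac{1}{\theta}-1}\cdot(-\theta)\,u_1^{-\theta-1}\\
&=u_1^{-\theta-1}\bigl(u_1^{-\theta}+u_2^{-\theta}-1\bigr)^{-\tfrac{1}{\theta}-1}.
\end{align*}
This is a purely mechanical computation with no real obstacle; the only thing to check is that the overall sign and the exponents are handled consistently, since the statement in \eqref{col-out-mac} is written with $\theta-1$ rather than $-\theta-1$ in the outer power.

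Finally, I would substitute $u_2=\tau_{\gamma_t}(u_1)$ into this derivative and insert the result into the integral from the preceding theorem, yielding
\begin{align*}
P_{out}^{CM}=\int_{0}^{1}u_1^{-\theta-1}\bigl(u_1^{-\theta}+\tau_{\gamma_t}(u_1)^{-\theta}-1\bigr)^{-\tfrac{1}{\theta}-1}\,du_1,
\end{align*}
with $\tau_{\gamma_t}(u_1)$ given by \eqref{tau}. This matches the structure of \eqref{col-out-mac} up to the typographical placement of the signs in the exponents, and completes the derivation. Since $\tau_{\gamma_t}$ involves the inverse regularized incomplete Beta function, the resulting integral is not elementary and would typically be evaluated numerically, but no further analytical manipulation is required to establish the corollary.
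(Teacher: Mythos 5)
Your proposal is correct and follows exactly the paper's route: take the general formula $P_{out}^{CM}=\int_0^1 C'(u_1,\tau_{\gamma_t}(u_1))\,du_1$ from the preceding theorem, compute $\partial C_{\mathrm{cl}}/\partial u_1$ for the Clayton copula, and substitute $u_2=\tau_{\gamma_t}(u_1)$. Note, moreover, that your derivative $u_1^{-\theta-1}\bigl(u_1^{-\theta}+u_2^{-\theta}-1\bigr)^{-\frac{1}{\theta}-1}$ is the correct one for the definition in \eqref{def-cl}; the version displayed in the paper (and hence in \eqref{col-out-mac}), with a leading minus sign and the exponents $\theta-1$ and $\theta$ in place of $-\theta-1$ and $-\theta$, is a typographical error --- as written its integrand is negative for generic $u_1,u_2\in(0,1)$ and so could not yield a probability --- so your flagged discrepancy reflects a defect of the stated corollary, not of your argument.
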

\begin{proof}
	By computing the first partial derivative for the Clayton copula $C_{\mathrm{cl}}$ as follows:
\begin{align}
C_{\mathrm{cl}}^\prime(u_1,u_2)=\frac{\partial C_{\mathrm{cl}}}{\partial u_1}=-u_1^{{\theta}-1}\left(u_1^{\theta}+u_2^{\theta}-1\right)^{-\frac{1}{{\theta}}-1},\label{der-Clayton}
\end{align}
and then, considering $\tau_{\gamma_t}(u_1)$ instead of $u_2$ in \eqref{der-Clayton} and inserting \eqref{der-Clayton} into \eqref{thm-out-mac}, the proof is completed. Besides, by considering the marginal CDF in \eqref{cdf-new} for $u_1$, \eqref{col-out-mac} can be evaluated numerically.
\end{proof}
\subsubsection{Doubly dirty MAC} The OP for a doubly dirty MAC under correlated fading condition can be defined as:
	\begin{align}
	P^{DM}_{out}&=\Pr\left(\frac{1}{2}\log_2\big(1+\min\left\{\gamma_1,\gamma_2\right\}\big)\leq R_t\right),\\
	&=F_1(\gamma_t)+F_2(\gamma_t)-F\left(\gamma_t,\gamma_t\right),\label{eq-ddm}
\end{align}
where $\gamma_t=2^{2R_t}-1$ and $F(\gamma_t,\gamma_t)$ can be obtained form Theorem \ref{thm-sklar} for any arbitrary copula $C$.
\begin{theorem}
	The OP over correlated Fisher-Snedecor $\mathcal{F}$ fading doubly dirty MAC and any arbitrary copula $C$ is determined as:
	\begin{align}\label{out-dd}
	P_{out}^{DM}=F_1(\gamma_t)+F_2(\gamma_t)-C\left(F_1\left(\gamma_t\right),F_2\left(\gamma_t\right)\right).
	\end{align}
\end{theorem}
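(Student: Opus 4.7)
The plan is to start from the inclusion--exclusion decomposition already recorded in equation \eqref{eq-ddm}, namely
\begin{align*}
P_{out}^{DM} = F_1(\gamma_t) + F_2(\gamma_t) - F(\gamma_t,\gamma_t),
\end{align*}
and then eliminate the bivariate term $F(\gamma_t,\gamma_t)$ by a direct invocation of Sklar's theorem. The identity above is in fact distribution-free: the event $\{\min\{\gamma_1,\gamma_2\}\leq \gamma_t\}$ equals the union $\{\gamma_1\leq \gamma_t\}\cup\{\gamma_2\leq \gamma_t\}$, so the decomposition follows from finite additivity alone and does not require independence. This is precisely why the identical algebraic skeleton used in the independent case (Theorem~\ref{p-out-mac-side}) can be recycled under arbitrary dependence; only the treatment of the last term has to change.

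Next, I would apply Theorem~\ref{thm-sklar} to the absolutely continuous pair $(\gamma_1,\gamma_2)$. Continuity of the Fisher--Snedecor $\mathcal{F}$ marginals $F_1,F_2$ given in \eqref{cdf1} ensures that the representing copula $C$ is unique on the range of $(F_1,F_2)$, so specialising Sklar's identity at the point $(v_1,v_2)=(\gamma_t,\gamma_t)$ yields $F(\gamma_t,\gamma_t)=C(F_1(\gamma_t),F_2(\gamma_t))$ unambiguously. Substituting this into the decomposition above produces exactly \eqref{out-dd}.

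The ``hard part'' is almost vacuous in this setting: the argument is purely measure-theoretic bookkeeping plus one citation of Sklar's theorem, with no integrals, no inversions, and no special-function manipulations. The only points that deserve explicit mention in writing up are (i) noting that inclusion--exclusion is valid without independence, so the step from the $\min$ event to the sum of marginals minus the joint CDF goes through unchanged in the correlated regime, and (ii) verifying that the marginals and copula entering \eqref{out-dd} are the physically correct ones, i.e., the Fisher--Snedecor $\mathcal{F}$ CDFs in \eqref{cdf1} and the copula $C$ that captures the prescribed dependence structure between $\gamma_1$ and $\gamma_2$. With these two observations stated, the theorem follows in one line.
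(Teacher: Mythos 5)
Your proposal is correct and follows essentially the same route as the paper: both start from the inclusion--exclusion identity in \eqref{eq-ddm} for the event $\{\min\{\gamma_1,\gamma_2\}\leq\gamma_t\}$ and then replace the joint CDF $F(\gamma_t,\gamma_t)$ via Sklar's theorem (Theorem~\ref{thm-sklar}) to obtain \eqref{out-dd}. Your added remarks on the distribution-free validity of the union decomposition and the uniqueness of the copula for continuous marginals are correct refinements of the same one-line argument, not a different approach.
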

\begin{proof}
By applying Theorem \ref{thm-sklar} to \eqref{eq-ddm}, the proof is completed.  
\end{proof}
\begin{corollary}
The OP over correlated Fisher-Snedecor $\mathcal{F}$ fading doubly dirty MAC, using the Clayton copula is determined as \eqref{out-ddm-corr}.
		\begin{figure*}[!t]
	\small
	\setcounter{equation}{29}
	\begin{align}\nonumber
		P^{DM}_{out}=\,&\mathcal{B}_1G_{2,2}^{1,2}\left(\begin{array}{c}
			\lambda_1\gamma_t\end{array}
		\Bigg\vert\begin{array}{c}
			(1-m_{1,s},1)\\
			(m_1,0)\\
		\end{array}\right)+\mathcal{B}_2G_{2,2}^{1,2}\left(\begin{array}{c}
			\lambda_2\gamma_t\end{array}
		\Bigg\vert\begin{array}{c}
			(1-m_{2,s},1)\\
			(m_2,0)\\
		\end{array}\right)\\
		&-\left(\left(\mathcal{B}_1G_{2,2}^{1,2}\left(\begin{array}{c}
			\lambda_1\gamma_t\end{array}
		\Bigg\vert\begin{array}{c}
			(1-m_{1,s},1)\\
			(m_1,0)\\
		\end{array}\right)\right)^{-\theta}+\left(\mathcal{B}_2G_{2,2}^{1,2}\left(\begin{array}{c}
			\lambda_2\gamma_t\end{array}
		\Bigg\vert\begin{array}{c}
			(1-m_{2,s},1)\\
			(m_2,0)\\
		\end{array}\right)\right)^{-\theta}-1\right)^{-1/\theta}.\label{out-ddm-corr}
	\end{align}
	\hrulefill\vspace{0cm}
\end{figure*} 
\end{corollary}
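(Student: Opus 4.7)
The plan is to treat this final corollary as a direct specialization of the immediately preceding theorem, combined with the Meijer-$G$ form of the marginal CDF already established in \eqref{cdf1}. The general result \eqref{out-dd} expresses $P_{out}^{DM}$ for an arbitrary copula $C$, so the entire task reduces to (i) plugging in the Clayton copula from Definition \ref{Clayton} for $C$ and (ii) replacing each $F_i(\gamma_t)$ by its closed-form Meijer-$G$ representation. No new integration or asymptotic analysis is required.

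Concretely, I would proceed in three short steps. First, I would invoke \eqref{out-dd} with arguments $F_1(\gamma_t)$ and $F_2(\gamma_t)$, producing $P_{out}^{DM} = F_1(\gamma_t) + F_2(\gamma_t) - C\bigl(F_1(\gamma_t),F_2(\gamma_t)\bigr)$. Second, I would specialize the copula using \eqref{def-cl}, so that
\begin{align*}
C_{\mathrm{cl}}\bigl(F_1(\gamma_t),F_2(\gamma_t)\bigr)
= \bigl(F_1(\gamma_t)^{-\theta}+F_2(\gamma_t)^{-\theta}-1\bigr)^{-1/\theta}.
\end{align*}
Third, I would substitute the explicit Meijer-$G$ expression
\begin{align*}
F_i(\gamma_t)=\mathcal{B}_i\,G_{2,2}^{1,2}\!\left(\lambda_i\gamma_t\Bigg\vert\begin{array}{c}(1-m_{i,s},1)\\(m_i,0)\end{array}\right)
\end{align*}
from \eqref{cdf1} for $i\in\{1,2\}$ into both the linear terms and the Clayton expression, which after collecting factors yields exactly \eqref{out-ddm-corr}.

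Since every building block is already proved earlier in the paper, I do not anticipate any genuine obstacle; the only thing to be careful about is notational bookkeeping, namely ensuring the arguments $(\lambda_i\gamma_t)$ and parameter lists $(1-m_{i,s},1),(m_i,0)$ of the two $G$-functions are carried through correctly when they are raised to the power $-\theta$ inside the Clayton bracket. Because $F_i(\gamma_t)\in[0,1]$ whenever $\gamma_t\ge 0$, the $-\theta$ exponent is well-defined and the outer $-1/\theta$ exponent returns a value in $[0,1]$, so the final expression is a bona fide probability. Thus the corollary follows by pure substitution, and the proof that I would write would be no more than the two sentences: \emph{Insert the Clayton copula \eqref{def-cl} into \eqref{out-dd}, and then substitute \eqref{cdf1} for $F_1(\gamma_t)$ and $F_2(\gamma_t)$; this yields \eqref{out-ddm-corr}.}
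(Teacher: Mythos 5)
Your proposal is correct and matches the paper's own proof essentially verbatim: the paper likewise sets $u_i=F_i(\gamma_t)=\mathcal{B}_i G_{2,2}^{1,2}(\lambda_i\gamma_t\vert\cdots)$, inserts these into the Clayton copula \eqref{def-cl}, and substitutes the result into \eqref{out-dd}. No differences worth noting.
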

\begin{proof}
	By considering $u_1=F_1(\gamma_t)=\mathcal{B}_1G_{2,2}^{1,2}\left(\begin{array}{c}
		\lambda_1\gamma_t\end{array}
	\Bigg\vert\begin{array}{c}
		(1-m_{1,s},1)\\
		(m_1,0)\\
	\end{array}\right)$ and $u_2=F_2(\gamma_t)=\mathcal{B}_2G_{2,2}^{1,2}\left(\begin{array}{c}
	\lambda_1\gamma_t\end{array}
\Bigg\vert\begin{array}{c}
(1-m_{2,s},1)\\
(m_2,0)\\
\end{array}\right)$ in the Clayton copula definition \eqref{def-cl}, and then substituting the result into \eqref{out-dd}, the proof is completed. \vspace{-0.3cm}
	\end{proof}
\section{Average Capacity}\label{sec-cap}
In this section, we represent the analytical expressions of the AC under correlated/independent fading for both clean and doubly dirty MAC models. 
	\subsection{Independent fading}
\subsubsection{Clean MAC} Assuming $\gamma_s=\gamma_1+\gamma_2$, the AC for the given fading clean MAC is defined as:
\begin{align}
\bar{\mathcal{C}}^{CM}=\int_{0}^{\infty}\frac{1}{2}\log_2\left(1+\gamma_s\right)f_s(\gamma_s)d\gamma_s,\label{asc-def}
\end{align}
where $f_s(\gamma_s)$ denotes the PDF of $\gamma_s$. Given the independence of the SNRs $\gamma_1$ and $\gamma_2$, the PDF of $\gamma_s$ can be defined as:
\begin{align}
	&f_s(\gamma_s)=\int_{0}^{\infty}f_1(\gamma_1)f_{2}(\gamma_s-\gamma_1)d\gamma_1.\label{pdf-def}
	\end{align}
By inserting \eqref{pdf1} into \eqref{pdf-def} and exploiting \cite[eq. (2.24.1.3)]{prudnikov1990more}, $f_s(\gamma_s)$ is determined as:
\begin{align}\nonumber
f_s(\gamma_s)=&\mathcal{A}_1\mathcal{A}_2\sum_{k=0}^{\infty}\frac{(-\lambda_2)^{k+1}\gamma_s^k}{k!}\\
&\hspace{-0.1cm}\times G_{3,3}^{2,3}\left(
\begin{array}{c}
	\frac{-\lambda_1}{\lambda_2}\\
\end{array}\Bigg\vert
\begin{array}{c}
	(0,-m_{1,s},k-m_2+1)\\
	(m_1-1,k+m_{2,s},k)\\
\end{array}\right).\label{pdf-gs}
\end{align}
Now, by using the PDF obtained in \eqref{pdf-gs}, the AC for independent fading clean MAC is determined as the following theorem. 
\begin{theorem}\label{thm-ac-mac}
		The AC over independent Fisher-Snedecor $\mathcal{F}$ fading clean MAC is determined as
\begin{align}\nonumber
\bar{\mathcal{C}}^{CM}=&\frac{\mathcal{A}_1\mathcal{A}_2}{\ln 2}\sum_{k=0}^{\infty}\frac{(-\lambda_2)^{k+1}}{k!}\left[\frac{\Gamma\left(-\left(k+1\right)\right)}{\Gamma\left(-k\right)}\right]^2\\
&\times G_{3,3}^{2,3}\left(
\begin{array}{c}
	\frac{-\lambda_1}{\lambda_2}\\
\end{array}\Bigg\vert
\begin{array}{c}
	(0,-m_{1,s},k-m_2+1)\\
	(m_1-1,k+m_{2,s},k)\\
\end{array}\right).
		\end{align}
\end{theorem}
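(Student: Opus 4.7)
The plan is to begin from the AC definition (35), substitute the series representation (36) of $f_s(\gamma_s)$, and interchange summation with integration. Since the Meijer $G_{3,3}^{2,3}(-\lambda_1/\lambda_2\mid\cdots)$ inside (36) does not depend on $\gamma_s$, it factors out of each term as a constant, and the problem reduces to evaluating, for every $k\geq 0$, the integral
\begin{equation*}
I_k=\int_0^\infty \log_2(1+\gamma_s)\,\gamma_s^{k}\,d\gamma_s.
\end{equation*}

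To compute $I_k$, I would represent the logarithm via the Meijer $G$ function
\begin{equation*}
\ln(1+\gamma_s)=G_{2,2}^{1,2}\!\left(\gamma_s\,\Big\vert\,\begin{array}{c}1,1\\1,0\end{array}\right),
\end{equation*}
so that $I_k$ becomes a Mellin-type integral of a single Meijer $G$-function against the power $\gamma_s^k$, and apply the standard Mellin transform identity for Meijer $G$-functions (e.g., Gradshteyn--Ryzhik~7.811). Specializing the Mellin parameter to $s=k+1$ yields a ratio of Gamma functions which, after invoking $\Gamma(-k)=-(k+1)\,\Gamma(-(k+1))$, collapses into the compact factor $[\Gamma(-(k+1))/\Gamma(-k)]^2$ (formally equal to $1/(k+1)^2$) together with a polynomial factor that is absorbed into the $k!$ in the denominator. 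Reassembling this with the prefactor $\mathcal{A}_1\mathcal{A}_2(-\lambda_2)^{k+1}/k!$ and the $k$-dependent $G_{3,3}^{2,3}$ term from (36), and converting $\log_2$ to the natural logarithm through the overall $1/\ln 2$, reproduces exactly the series claimed in Theorem~\ref{thm-ac-mac}.

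The main obstacle I anticipate is the justification of term-by-term integration: each $I_k$ diverges in the ordinary Lebesgue sense because $\log_2(1+\gamma_s)\,\gamma_s^k$ grows polynomially at infinity, so the neat factor $[\Gamma(-(k+1))/\Gamma(-k)]^2$ in the statement is in fact a regularized $0/0$ limit. The cleanest rigorous route is to stay in the Mellin--Barnes picture throughout, carrying out the exchange of $\gamma_s$-integral and $k$-summation as a residue expansion of a single contour integral, so that the Gamma ratio emerges naturally as a finite limit rather than from separate divergent integrals. Once this regularization is justified, what remains is routine bookkeeping of constants, and convergence of the outer series in $k$ for the parameter ranges of interest ensures the expression is useful for numerical evaluation.
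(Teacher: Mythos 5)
Your proposal follows essentially the same route as the paper's Appendix B: substitute the series form of $f_s(\gamma_s)$, pull the $k$-dependent $G_{3,3}^{2,3}$ factor (which is independent of $\gamma_s$) out of the integral, rewrite $\ln(1+\gamma_s)$ as $G_{2,2}^{1,2}$, and evaluate $\int_0^\infty \gamma_s^k\,G_{2,2}^{1,2}(\gamma_s\mid\cdots)\,d\gamma_s$ via the Mellin-transform identity for Meijer $G$-functions (the paper cites \cite[2.24.2.1]{prudnikov1990more}), arriving at the same factor $\left[\Gamma(-(k+1))/\Gamma(-k)\right]^2$. Your additional observation that each $I_k$ diverges in the ordinary sense, so that this Gamma ratio must be read as a regularized limit best justified by staying in the Mellin--Barnes contour picture, is a fair caveat that the paper's own proof does not address, but it does not alter the derivation.
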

\begin{proof}
The details of the proof are in Appendix \ref{ac-mac}.
\end{proof}
\subsubsection{Doubly dirty MAC} Assuming $\gamma_n=\min\{\gamma_1,\gamma_2\}$, the AC for the considered fading doubly dirty MAC is defined as:
\begin{align}
\bar{\mathcal{C}}^{DM}=\int_{0}^{\infty}\frac{1}{2}\log_2\left(1+\gamma_n\right)f_n(\gamma_n)d\gamma_n,\label{def-ac-dd}
\end{align}
where $f_n(\gamma_n)$ denotes the PDF of $\gamma_n$ and the CDF of $\gamma_n$ is given by $F_n(\gamma_n)=\Pr\left(\min\{\gamma_1,\gamma_2\}\leq\gamma_n\right)$. Thus, by definition, the PDF of $\gamma_n$ is defined as:
\begin{align}
&\hspace{-0.3cm}f_n(\gamma_n)=f_1(\gamma_n)\left(1-F_2\left(\gamma_n\right)\right)+f_2(\gamma_n)\left(1-F_1\left(\gamma_n\right)\right).\label{pdf-gm}
\end{align}
Now, by inserting \eqref{pdf1} and \eqref{cdf1} into \eqref{pdf-gm}, and utilizing the respective results, the AC for the fading doubly dirty MAC model is determined as the following theorem. 
\begin{theorem}\label{thm-ac-dd}
The AC over independent Fisher-Snedecor $\mathcal{F}$ fading doubly dirty MAC is determined as \eqref{ac-dd}.
		\begin{figure*}[!t]
	\small
	\setcounter{equation}{36}
	\begin{align}\nonumber
	\bar{\mathcal{C}}^{DM}=&\frac{\mathcal{A}_1}{2\lambda_1\ln2}G_{3,3}^{2,3}\left(
	\begin{array}{c}
		\frac{1}{\lambda_1}\\
	\end{array}\Bigg\vert
	\begin{array}{c}
		(1,1,1-m_{1})\\
		(1,m_{1,s},0)\\
	\end{array}\right)-\frac{\mathcal{A}_1\mathcal{B}_2}{2\lambda_1\ln 2}G_{1,1:2,2:2,2}^{1,1:1,2:1,2}\left(\begin{array}{c}
	\frac{\lambda_2}{\lambda_1},\frac{1}{\lambda_1}\end{array}
\Bigg\vert\begin{array}{c}
(m_1)\\
(m_{1,s})\\
\end{array}\Bigg\vert
\begin{array}{c}
(1-m_{2,s},1)\\
(m_2,0)\\
\end{array}\Bigg\vert
\begin{array}{c}
(1,1)\\
(1,0)\\
\end{array}\right)\\
&+\frac{\mathcal{A}_2}{2\lambda_2\ln2}G_{3,3}^{2,3}\left(
\begin{array}{c}
	\frac{1}{\lambda_2}\\
\end{array}\Bigg\vert
\begin{array}{c}
	(1,1,1-m_{2})\\
	(1,m_{2,s},0)\\
\end{array}\right)-\frac{\mathcal{B}_1\mathcal{A}_2}{2\lambda_2\ln 2}G_{1,1:2,2:2,2}^{1,1:1,2:1,2}\left(\begin{array}{c}
\frac{\lambda_1}{\lambda_2},\frac{1}{\lambda_2}\end{array}
\Bigg\vert\begin{array}{c}
(m_2)\\
(m_{2,s})\\
\end{array}\Bigg\vert
\begin{array}{c}
(1-m_{1,s},1)\\
(m_1,0)\\
\end{array}\Bigg\vert
\begin{array}{c}
(1,1)\\
(1,0)\\
\end{array}\right).\label{ac-dd}
	\end{align}
	\hrulefill\vspace{0cm}
\end{figure*} 
\end{theorem}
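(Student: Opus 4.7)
The plan is to substitute the PDF~\eqref{pdf1} and CDF~\eqref{cdf1} into~\eqref{pdf-gm}, distribute the product $f_i(\gamma)(1-F_j(\gamma))$, and insert the resulting $f_n(\gamma_n)$ into~\eqref{def-ac-dd}. Writing $\tfrac{1}{2}\log_2(1+\gamma)=\ln(1+\gamma)/(2\ln 2)$, the AC decomposes as
\begin{align*}
\bar{\mathcal{C}}^{DM}=\frac{1}{2\ln 2}\bigl[I_1+I_2-J_{1,2}-J_{2,1}\bigr],
\end{align*}
where $I_i=\int_0^{\infty}\ln(1+\gamma)f_i(\gamma)\,d\gamma$ and $J_{i,j}=\int_0^{\infty}\ln(1+\gamma)f_i(\gamma)F_j(\gamma)\,d\gamma$. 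Each of the four pieces will then be tackled separately, and assembling them will yield~\eqref{ac-dd}.

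The next step is to rewrite the logarithm as the Meijer G-function $\ln(1+\gamma)=G_{2,2}^{1,2}\bigl(\gamma\,\vert\,(1,1);(1,0)\bigr)$, so that every factor in each integrand is a Meijer G-function with an argument linear in $\gamma$. For the pure integrals $I_i$, the change of variable $u=\lambda_i\gamma$ produces the product of two G-functions sharing the argument $u$, and the classical Mellin-type identity [Prudnikov, Vol.~3, Eq.~(2.24.1.1)] collapses this product to a single univariate $G_{3,3}^{2,3}$ evaluated at $1/\lambda_i$; the induced prefactor $\mathcal{A}_i/(2\lambda_i\ln 2)$ reproduces the first and third terms of~\eqref{ac-dd}.

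For each coupled integral $J_{i,j}$, the same substitution leaves three Meijer G-functions with the proportional arguments $u/\lambda_i$, $u$, and $(\lambda_j/\lambda_i)u$; this triple product no longer collapses to a single G-function. Instead, I would invoke the multivariate Mellin--Barnes contour representation of the bivariate (extended) Meijer G-function and apply the triple-G integration identity to recognize the result as $G_{1,1:2,2:2,2}^{1,1:1,2:1,2}$ with argument pair $(\lambda_j/\lambda_i,\,1/\lambda_i)$ and prefactor $\mathcal{A}_i\mathcal{B}_j/(2\lambda_i\ln 2)$, which furnishes the second and fourth terms of~\eqref{ac-dd}. The principal obstacle will be the parameter bookkeeping for this bivariate G-function: placing the Fisher--Snedecor shape parameters $(m_i,m_{i,s})$ and $(m_j,m_{j,s})$ together with the logarithm's parameters $(1,1),(1,0)$ correctly into the outer, middle, and inner blocks of $G_{1,1:2,2:2,2}^{1,1:1,2:1,2}$, and tracking the index shifts induced by the argument rescaling so that the four assembled terms reproduce~\eqref{ac-dd} exactly.
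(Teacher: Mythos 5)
Your proposal follows essentially the same route as the paper's proof: the same four-term decomposition of $\bar{\mathcal{C}}^{DM}$, the same rewriting of $\ln(1+\gamma)$ as $G_{2,2}^{1,2}$, the classical two-G Mellin product identity for the uncoupled integrals, and the double Mellin--Barnes (bivariate Meijer G) machinery for the coupled ones, with the correct prefactors and argument pairs. The only work left is the parameter bookkeeping you already flag, which the paper carries out by expanding one G-function as a contour integral and applying \cite[eq.\ (2.25.1.1)]{prudnikov1990more} to the remaining pair before identifying the resulting double integral with $G_{1,1:2,2:2,2}^{1,1:1,2:1,2}$.
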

\begin{proof}
By plugging \eqref{pdf-gm} into \eqref{def-ac-dd}, $\bar{\mathcal{C}}^{DM}$ can be rewritten as:
\begin{align}
	\bar{\mathcal{C}}^{DM}=\mathcal{J}_1-\mathcal{J}_2+\mathcal{J}_3-\mathcal{J}_4,\label{j1-j4}
\end{align}
where
\begin{align}
\mathcal{J}_1=\int_{0}^{\infty}\frac{1}{2}\log_2(1+\gamma_n)f_1(\gamma_n)d\gamma_n,\label{j1}
\end{align}
\begin{align}
	\mathcal{J}_2=\int_{0}^{\infty}\frac{1}{2}\log_2(1+\gamma_n)f_1(\gamma_n)F_2(\gamma_n)d\gamma_n,\label{j2}
\end{align}
\begin{align}
	\mathcal{J}_3=\int_{0}^{\infty}\frac{1}{2}\log_2(1+\gamma_n)f_2(\gamma_n)d\gamma_n,
\end{align}
\begin{align}
	\mathcal{J}_4=\int_{0}^{\infty}\frac{1}{2}\log_2(1+\gamma_n)f_2(\gamma_n)F_1(\gamma_n)d\gamma_n.
\end{align}
Then, by computing the integrals $\mathcal{J}_l$ for $l\in\{1,2,3,4\}$, the proof is completed. The details of the proof are in Appendix \ref{app-ac-dd}.
\end{proof}
	\subsection{Correlated fading}
\subsubsection{Clean MAC} By exploiting the copula theory and assuming the dependence between SNRs $\gamma_1$ and $\gamma_2$, the AC for the considered correlated fading clean MAC is derived as the following theorem. 
\begin{theorem}
The AC over correlated Fisher-Snedecor $\mathcal{F}$ fading clean MAC and any arbitrary copula density function $c$ is determined as:
\begin{align}\nonumber
	\bar{\mathcal{C}}^{CM}
	&=\int_{0}^{\infty}\int_{0}^{1}\frac{1}{2}\log_2\left(1+\gamma_s\right)c\left(u_1,\tau_{\gamma_s}(u_1)\right)\\
	&\quad\times f_2\left(F_2^{-1}\left(\tau_{\gamma_s}(u_1)\right)\right)du_1d\gamma_s,\label{ac-gen}
\end{align}
where $c(u_1,\tau_{\gamma_s}(u_1))=\frac{\partial^2C(u_1,\tau_{\gamma_s}(u_1))}{\partial u_1\partial \tau_{\gamma_s}(u_1)}$, and 
\begin{align}\nonumber
	&\tau_{\gamma_s}(u_1)=\\ 
	&I_{1-\frac{1}{\frac{m_2}{m_{2,s}} \left(\gamma_s-\frac{m_{1,s}}{m_1}\left(\frac{1}{I_{(1,-u_1)}^{-1}(m_{1,s},m_1)}-1\right)\right)+m_{2,s}}}(m_2,m_{2,s}).
\end{align}
\end{theorem}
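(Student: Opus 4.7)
The plan is to reduce the theorem to a direct substitution of the correlated-sum PDF derived earlier (the corollary following Theorem \ref{thm-sum}) into the AC integral, and then to re-express the argument of $f_2$ via the marginal CDF inverse so as to match the form $f_2(F_2^{-1}(\tau_{\gamma_s}(u_1)))$ in the statement. More precisely, I would start from the definition
\begin{align*}
\bar{\mathcal{C}}^{CM}=\int_{0}^{\infty}\tfrac{1}{2}\log_2(1+\gamma_s)\,f_s(\gamma_s)\,d\gamma_s,
\end{align*}
and immediately insert the copula-based PDF of the sum $\gamma_s=\gamma_1+\gamma_2$ supplied by equation \eqref{pdf} with $V_1=\gamma_1$, $V_2=\gamma_2$, $\Omega=\gamma_s$. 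This gives a double integral over $u_1\in[0,1]$ and $\gamma_s\in(0,\infty)$ whose integrand is $\tfrac12\log_2(1+\gamma_s)\,c(u_1,F_2(\gamma_s-F_1^{-1}(u_1)))\,f_2(\gamma_s-F_1^{-1}(u_1))$.

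Next I would argue that the order of integration can be swapped (Fubini/Tonelli, since the integrand is nonnegative) to conform to the displayed form. The only remaining task is notational: set $\tau_{\gamma_s}(u_1):=F_2(\gamma_s-F_1^{-1}(u_1))$, consistent with how $\tau_{\gamma_t}$ was introduced in Theorem \eqref{thm-out-mac}. Since $F_2$ is continuous and strictly increasing for the Fisher--Snedecor $\mathcal{F}$ marginals, it is invertible, so $\gamma_s-F_1^{-1}(u_1)=F_2^{-1}(\tau_{\gamma_s}(u_1))$, and therefore $f_2(\gamma_s-F_1^{-1}(u_1))=f_2(F_2^{-1}(\tau_{\gamma_s}(u_1)))$, which is exactly the factor appearing in \eqref{ac-gen}.

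To obtain the explicit closed form of $\tau_{\gamma_s}(u_1)$ stated in the theorem, I would reuse the computation already carried out for Theorem \eqref{thm-out-mac}: rewriting both Fisher--Snedecor $\mathcal{F}$ CDFs via the regularized incomplete Beta function as in \eqref{cdf-new}, inverting $F_1$ to get $F_1^{-1}(u_1)=\tfrac{m_{1,s}}{m_1}\bigl(\tfrac{1}{I_{(1,-u_1)}^{-1}(m_{1,s},m_1)}-1\bigr)$, and substituting into $F_2(\gamma_s-F_1^{-1}(u_1))$. This step is purely algebraic and produces the nested Beta-function expression with $\gamma_s$ replacing $\gamma_t$.

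The main obstacle is not the algebra but justifying the interchange of the integrals in the final expression; because the copula density $c$ and the Fisher--Snedecor $\mathcal{F}$ PDFs are all nonnegative and measurable, Tonelli's theorem applies directly, so this is routine. A minor but real subtlety is that \eqref{pdf} integrates over those $u_1\in[0,1]$ for which $\gamma_s-F_1^{-1}(u_1)\ge 0$; I would note that $F_2$ vanishes on the negative reals, so the integrand in \eqref{ac-gen} is automatically zero outside the effective region and no truncation of the $u_1$-range is needed, which is what lets us write the $u_1$-integral over the whole unit interval.
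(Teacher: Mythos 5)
Your proposal matches the paper's own proof: both substitute the copula-based sum PDF from \eqref{pdf} into the AC definition \eqref{asc-def}, relabel $F_2(\gamma_s-F_1^{-1}(u_1))$ as $\tau_{\gamma_s}(u_1)$ using the invertibility of $F_2$, and obtain the explicit Beta-function form by the same inversion already done for the outage-probability theorem. Your added remarks on Tonelli's theorem and on the effective $u_1$-range are sound elaborations of steps the paper leaves implicit, not a different argument.
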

\begin{proof} By assuming $\gamma_s=\gamma_1+\gamma_2$ and inserting   $\tau_{\gamma_s}(u_1)=F_2\left(\gamma_s-F_1^{-1}(u_1)\right)$, and $F_2^{-1}(\tau_{\gamma_s}(u_1))=\gamma_s-F_1^{-1}(u_1)$ into \eqref{pdf}, the PDF $f_s(\gamma_s)$ is obtained. Then, by substituting $f_s(\gamma_s)$ into the AC definition for the fading clean MAC, \eqref{asc-def}, the proof is completed. 
\end{proof}
\begin{corollary}
The AC over correlated Fisher-Snedecor $\mathcal{F}$ fading clean MAC, using the Clayton copula is given by
\begin{align}\nonumber
	&\bar{\mathcal{C}}^{CM}
	=\frac{(1+\theta)}{2}\int_{0}^{\infty}\int_{0}^{1}\log_2\left(1+\gamma_s\right)\left(u_1\tau_{\gamma_s}(u_1)\right)^{-1-\theta}\\
	&\times\left({u_1}^{-\theta}+{\tau_{\gamma_s}^{-\theta}(u_1)}-1\right)^{-2-\frac{1}{\theta}}
 f_2\left(F_2^{-1}\left(\tau_{\gamma_s}(u_1)\right)\right)du_1d\gamma_s.
\end{align}
\end{corollary}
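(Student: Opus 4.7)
The plan is to specialize the general expression \eqref{ac-gen} for the AC over a correlated Fisher-Snedecor $\mathcal{F}$ fading clean MAC to the Clayton copula by computing the explicit form of its copula density $c_{\mathrm{cl}}(u_1,u_2)$ and inserting it, together with $u_2=\tau_{\gamma_s}(u_1)$, into the integrand.

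First, I would start from the Clayton copula definition in \eqref{def-cl} and the first partial derivative already recorded in \eqref{der-Clayton}. The copula density is $c_{\mathrm{cl}}(u_1,u_2)=\partial^{2}C_{\mathrm{cl}}/\partial u_1\partial u_2$, so the next step is to differentiate \eqref{der-Clayton} once more with respect to $u_2$. Using the chain rule and the fact that only the bracketed factor $\bigl(u_1^{-\theta}+u_2^{-\theta}-1\bigr)$ depends on $u_2$, a direct computation yields
\begin{align}
c_{\mathrm{cl}}(u_1,u_2)=(1+\theta)\,(u_1 u_2)^{-1-\theta}\bigl(u_1^{-\theta}+u_2^{-\theta}-1\bigr)^{-2-1/\theta}.
\end{align}
This routine differentiation is the only real calculation; care is needed in tracking the sign and in combining the $-1/\theta-1$ exponent coming from the outer power with the additional factor of $-\theta\,u_2^{-\theta-1}$ produced by the inner derivative, which together give the factor $(1+\theta)$ in front and the exponent $-2-1/\theta$ on the bracketed term.

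Next, I would take the general identity \eqref{ac-gen}, which already holds for an arbitrary copula $C$ (with density $c$), and substitute the Clayton density above with the second argument replaced by $\tau_{\gamma_s}(u_1)$, exactly as dictated by the theorem that precedes the corollary. The factor $(1+\theta)$ is pulled outside the double integral and absorbed together with the $1/2$ from $\tfrac12\log_2(1+\gamma_s)$ to give the prefactor $(1+\theta)/2$ stated in the corollary, while the Jacobian factor $f_2\!\left(F_2^{-1}(\tau_{\gamma_s}(u_1))\right)$ is left untouched since it is inherited directly from \eqref{ac-gen}. The expression for $\tau_{\gamma_s}(u_1)$ is the one produced by the substitution $F_2\!\left(\gamma_s-F_1^{-1}(u_1)\right)$ using the regularized Beta-function form of the Fisher-Snedecor CDF in \eqref{cdf-new}, which has already been derived in the preceding development; no new work is needed there.

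The only potential obstacle is purely bookkeeping: one has to verify that the $(u_1\tau_{\gamma_s}(u_1))^{-1-\theta}$ and $\bigl(u_1^{-\theta}+\tau_{\gamma_s}^{-\theta}(u_1)-1\bigr)^{-2-1/\theta}$ factors in the claimed expression line up exactly with the form of $c_{\mathrm{cl}}$ evaluated at $(u_1,\tau_{\gamma_s}(u_1))$, and that no additional Jacobian appears from the change of variable $u_1=F_1(\gamma_1)$, which is automatic because \eqref{ac-gen} is already written in $u_1$-coordinates. With those checks done, the claimed formula follows immediately, completing the proof.
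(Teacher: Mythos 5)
Your proposal is correct and follows essentially the same route as the paper: compute the Clayton copula density $c_{\mathrm{cl}}(u_1,u_2)=(1+\theta)(u_1u_2)^{-1-\theta}\bigl(u_1^{-\theta}+u_2^{-\theta}-1\bigr)^{-2-1/\theta}$ and substitute it, with $u_2=\tau_{\gamma_s}(u_1)$, into the general expression \eqref{ac-gen}. Your bookkeeping of the $(1+\theta)$ prefactor and the exponent $-2-1/\theta$ is the same calculation the paper performs to arrive at \eqref{cop-dens}, so nothing is missing.
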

\begin{proof} By computing the copula density function $c_{cl}$ for the Clayton copula as follows:
	\begin{align}
	c_{\mathrm{cl}}(u_1,u_2)=\left(1+\theta\right)\left(u_1u_2\right)^{-1-\theta}\left({u_1}^{-\theta}+{u_2}^{-\theta}-1\right)^{-2-\frac{1}{\theta}},\label{cop-dens}
	\end{align}
and then, considering $\tau_{\gamma_s}(u_1)$ instead of $u_2$ in \eqref{cop-dens} and substituting \eqref{cop-dens} into \eqref{ac-gen}, the proof is completed.
\end{proof}
\subsubsection{Doubly dirty MAC} Similarly, with the same assumption, the AC for the considered correlated fading doubly dirty MAC is obtained as the following theorem. 
\begin{theorem}
The AC over correlated Fisher-Snedecor $\mathcal{F}$ fading doubly dirty MAC and any arbitrary copula density function $c$ is determined as:
\begin{align}\nonumber
\bar{\mathcal{C}}^{DM}=&\int_{0}^{\infty}\int_{0}^{\infty}\frac{1}{2}\log_2\left(1+\min\left\{\gamma_1,\gamma_2\right\}\right)\\
&\times f_1(\gamma_1)f_2(\gamma_2)c\left(F_1(\gamma_1),F_2(\gamma_2)\right)d\gamma_1d\gamma_2,\label{ac-dd-cor}
\end{align}
\end{theorem}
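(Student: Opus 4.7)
The plan is to build the double integral from the ground up: start from the probabilistic definition of $\bar{\mathcal{C}}^{DM}$ as the expectation of $\tfrac{1}{2}\log_2(1+\min\{\gamma_1,\gamma_2\})$ under the joint distribution of $(\gamma_1,\gamma_2)$, then replace that joint distribution by its copula representation. Because the previous theorems in the excerpt already handle the correlated clean MAC by first extracting the PDF of the sum via Theorem \ref{thm-sum}, here the situation is actually simpler: the SNR inside the capacity formula is a bivariate functional ($\min$) that does not require pushing forward to a one-dimensional distribution, so I can work directly with the joint density.

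First, I would write
\begin{align*}
\bar{\mathcal{C}}^{DM}=\int_{0}^{\infty}\!\!\int_{0}^{\infty}\tfrac{1}{2}\log_2(1+\min\{\gamma_1,\gamma_2\})\,f(\gamma_1,\gamma_2)\,d\gamma_1 d\gamma_2,
\end{align*}
where $f(\gamma_1,\gamma_2)$ is the joint PDF of $(\gamma_1,\gamma_2)$. Next, by Sklar's theorem (Theorem \ref{thm-sklar}), the joint CDF admits the representation $F(\gamma_1,\gamma_2)=C(F_1(\gamma_1),F_2(\gamma_2))$. Since the marginals $F_1,F_2$ for the Fisher-Snedecor $\mathcal{F}$ SNRs are absolutely continuous (see \eqref{pdf1}--\eqref{cdf1}) and the copula $C$ is assumed to admit a density $c$, I can differentiate twice using the chain rule to obtain the standard identity
\begin{align*}
f(\gamma_1,\gamma_2)=\frac{\partial^{2}F(\gamma_1,\gamma_2)}{\partial\gamma_1\partial\gamma_2}=c\bigl(F_1(\gamma_1),F_2(\gamma_2)\bigr)\,f_1(\gamma_1)\,f_2(\gamma_2).
\end{align*}

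Substituting this factorisation back into the double integral yields exactly the claimed expression \eqref{ac-dd-cor}. The only technical points are that (i) the interchange of differentiation implicit in the above identity is justified by absolute continuity of the marginals and the assumed existence of $c$, and (ii) the integrand is non-negative, so Tonelli's theorem permits the iterated form of the integral without any convergence caveats beyond the existence of $\bar{\mathcal{C}}^{DM}$ itself. There is no significant obstacle here: unlike the clean MAC case where Theorem \ref{thm-sum} was needed to reduce $\gamma_1+\gamma_2$ to a single variable through $\tau_{\gamma_s}(u_1)$, the $\min\{\cdot,\cdot\}$ functional is handled without any change of variables, making this essentially a one-line application of Sklar's theorem and the copula density definition used earlier in \eqref{cop-dens}.
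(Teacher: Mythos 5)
Your proposal is correct and follows essentially the same route as the paper: both express $\bar{\mathcal{C}}^{DM}$ as an expectation over the joint density and obtain $f(\gamma_1,\gamma_2)=c\left(F_1(\gamma_1),F_2(\gamma_2)\right)f_1(\gamma_1)f_2(\gamma_2)$ by applying the chain rule to Sklar's theorem. The added remarks on Tonelli and absolute continuity are harmless elaborations, not a different argument.
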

\begin{proof} By applying the chain rule to \eqref{sklar}, the joint PDF $f(\gamma_1,\gamma_2)$ is determined as
	\begin{align}
	f(\gamma_1,\gamma_2)&=\frac{\partial^2 C\left(F_1(\gamma_1),F_(\gamma_2)\right)}{\partial \gamma_1\partial \gamma_2},\\
	&=\frac{\partial^2 C\left(F_1(\gamma_1),F_(\gamma_2)\right)}{\partial F_1(\gamma_1)\partial F_2(\gamma_2)}\frac{\partial F_1(\gamma_1)}{\partial \gamma_1}\frac{\partial F_2(\gamma_2)}{\partial\gamma_2},\\
	&=c\left(F_1(\gamma_1),F_2(\gamma_2)\right)f_1(\gamma_1)f_2(\gamma_2).\label{copula-density}
	\end{align}
Now, by applying \eqref{copula-density} to the AC definition, the proof is completed. 
	\end{proof}
\begin{corollary}
The AC over correlated Fisher-Snedecor $\mathcal{F}$ fading doubly dirty MAC, using the Clayton copula is given by
\begin{align}\nonumber
&\hspace{-0.2cm}\bar{\mathcal{C}}^{DM}=\frac{\left(1+\theta\right)}{2}\int_{0}^{1}\int_{0}^{1}\log_2\left(1+\min\right\{F_1^{-1}(u_1),F_2^{-1}(u_2)\left\}\right)\\
&\quad\quad\,\times \left(u_1u_2\right)^{-1-\theta}\left({u_1}^{-\theta}+{u_2}^{-\theta}-1\right)^{-2-\frac{1}{\theta}}du_1du_2,\label{ac-dd-cl}
\end{align}
where $F_i^{-1}(u_i)=\frac{m_{i,s}}{m_i} \left(\frac{1}{I_{(1,-u_i)}^{-1}(m_{i,s},m_i)}-1\right)$ for $i\in\{1,2\}$.
\end{corollary}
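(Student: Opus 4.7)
The plan is to start from the integral expression for $\bar{\mathcal{C}}^{DM}$ given in the preceding theorem (equation \eqref{ac-dd-cor}) and to reduce it to the claimed double integral over the unit square by a change of variables combined with the explicit form of the Clayton copula density.

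First I would perform the substitution $u_i = F_i(\gamma_i)$ for $i\in\{1,2\}$ in \eqref{ac-dd-cor}. Since the marginal CDFs are absolutely continuous, this yields $du_i = f_i(\gamma_i)\,d\gamma_i$ and $\gamma_i = F_i^{-1}(u_i)$, with $u_i$ ranging over $[0,1]$. The product $f_1(\gamma_1)f_2(\gamma_2)\,d\gamma_1\,d\gamma_2$ cleanly becomes $du_1\,du_2$, and $\min\{\gamma_1,\gamma_2\}$ becomes $\min\{F_1^{-1}(u_1),F_2^{-1}(u_2)\}$. This step removes the marginal PDFs from the integrand entirely, which is why the Clayton density expression that will appear after substitution is \emph{evaluated directly in copula coordinates}.

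Second, I would substitute the Clayton copula density $c_{\mathrm{cl}}(u_1,u_2)$ computed in \eqref{cop-dens},
\begin{align*}
c_{\mathrm{cl}}(u_1,u_2)=\left(1+\theta\right)\left(u_1u_2\right)^{-1-\theta}\left({u_1}^{-\theta}+{u_2}^{-\theta}-1\right)^{-2-\frac{1}{\theta}},
\end{align*}
into the transformed integral. Pulling the constant factor $(1+\theta)/2$ out of the double integral produces exactly \eqref{ac-dd-cl}.

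Third, I would identify the explicit form of $F_i^{-1}$ by inverting the regularized incomplete Beta representation \eqref{cdf-new}, namely $F_i(\gamma_i)=I_{m_i\gamma_i/(m_i\gamma_i+m_{i,s})}(m_i,m_{i,s})$. Solving this relation for $\gamma_i$ in terms of the inverse regularized Beta function yields the stated formula $F_i^{-1}(u_i)=\frac{m_{i,s}}{m_i}\left(\frac{1}{I_{(1,-u_i)}^{-1}(m_{i,s},m_i)}-1\right)$, which is precisely the expression already derived just below \eqref{cdf-new} for the index $i=1$ and is obtained by symmetry for $i=2$. This completes the derivation.

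There is no real obstacle here: the argument is a one-line change of variables plus substitution of two already-computed quantities (the Clayton density \eqref{cop-dens} and the inverse marginal CDF). The only point requiring mild care is recognizing that the change of variables is globally valid because $F_i$ is strictly increasing (the marginal SNRs are absolutely continuous on $(0,\infty)$), so $F_i^{-1}$ is well-defined on $(0,1)$ and the Jacobian computation is straightforward. Hence the proof is essentially a bookkeeping exercise combining \eqref{ac-dd-cor}, \eqref{cop-dens}, and \eqref{cdf-new}.
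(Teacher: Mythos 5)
Your proposal is correct and follows essentially the same route as the paper: substitute the Clayton density \eqref{cop-dens} into the general expression \eqref{ac-dd-cor} and pass to copula coordinates via $u_i=F_i(\gamma_i)$. You in fact spell out the change of variables and the Jacobian cancellation $f_i(\gamma_i)\,d\gamma_i=du_i$ that the paper's one-line proof leaves implicit, which is a worthwhile clarification but not a different argument.
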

\begin{proof}
By substituting the Clayton copula density function, \eqref{cop-dens}, into $\eqref{ac-dd-cor}$, the AC is derived as \eqref{ac-dd-cl}.
\end{proof}
\section{Numerical Results}\label{num-results}\vspace{-0.1cm}
\begin{figure*}
	\centering
	\subfigure[Clean MAC]{%
		\includegraphics[width=0.26\textwidth]{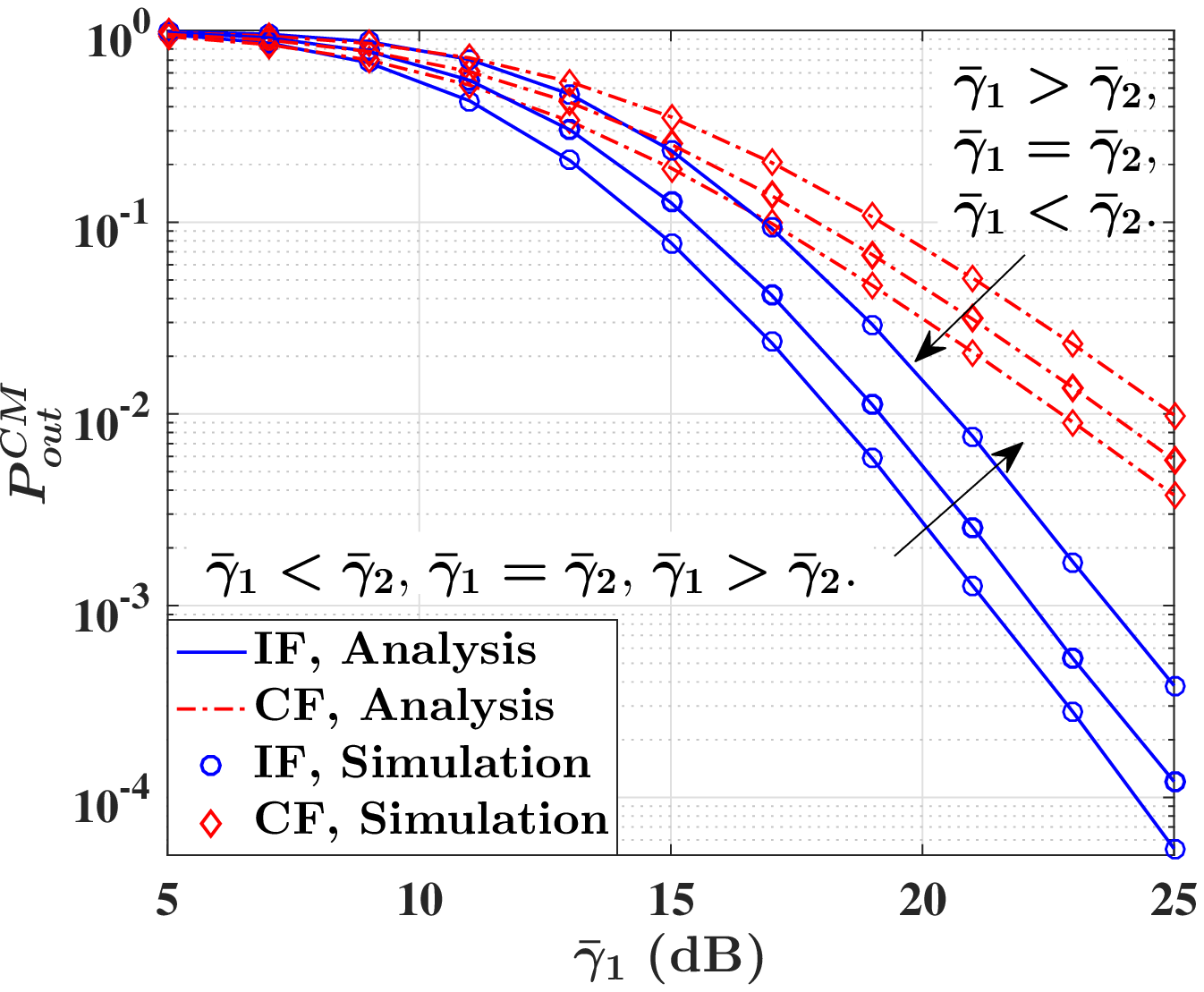}%
		\label{fig-mac}%
	}\hspace{-0.37cm}
	\subfigure[Doubly dirty MAC]{%
		\includegraphics[width=0.26\textwidth]{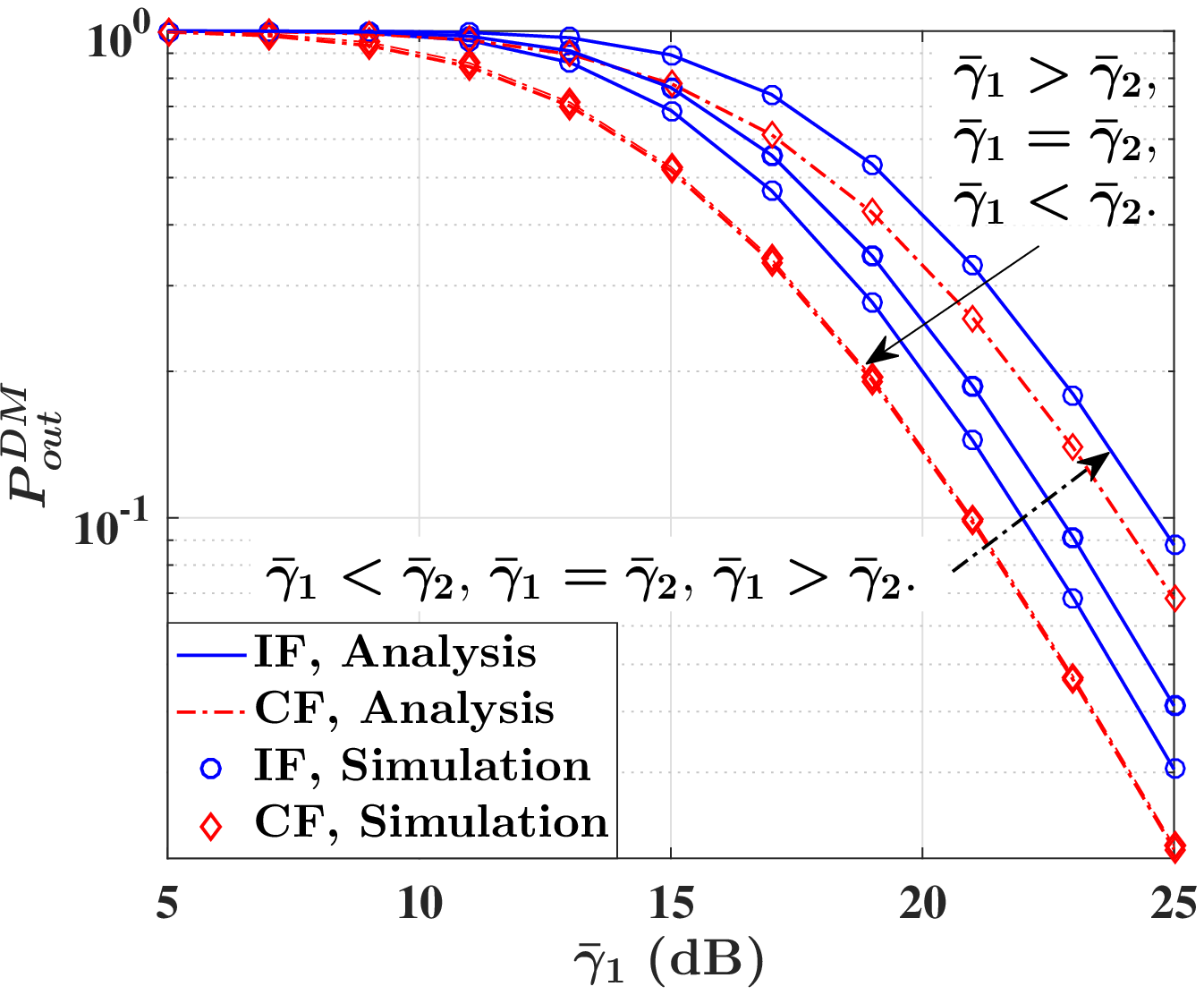}%
		\label{fig-side}%
	}%
	\subfigure[Clean MAC]{%
		\includegraphics[width=0.26\textwidth]{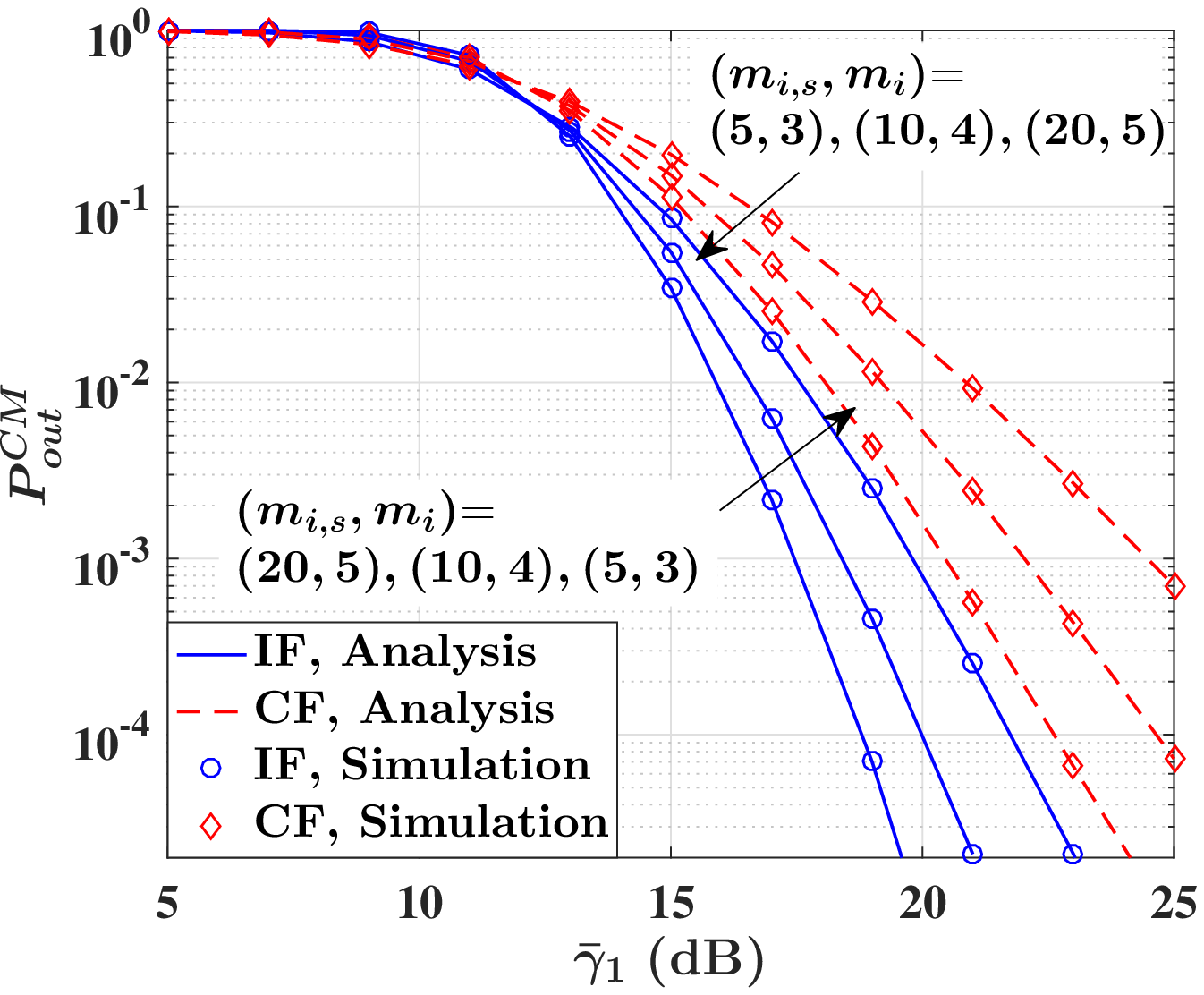}%
		\label{fig-mac-m}%
	}\hspace{-0.37cm}
	\subfigure[Doubly dirty MAC]{%
		\includegraphics[width=0.26\textwidth]{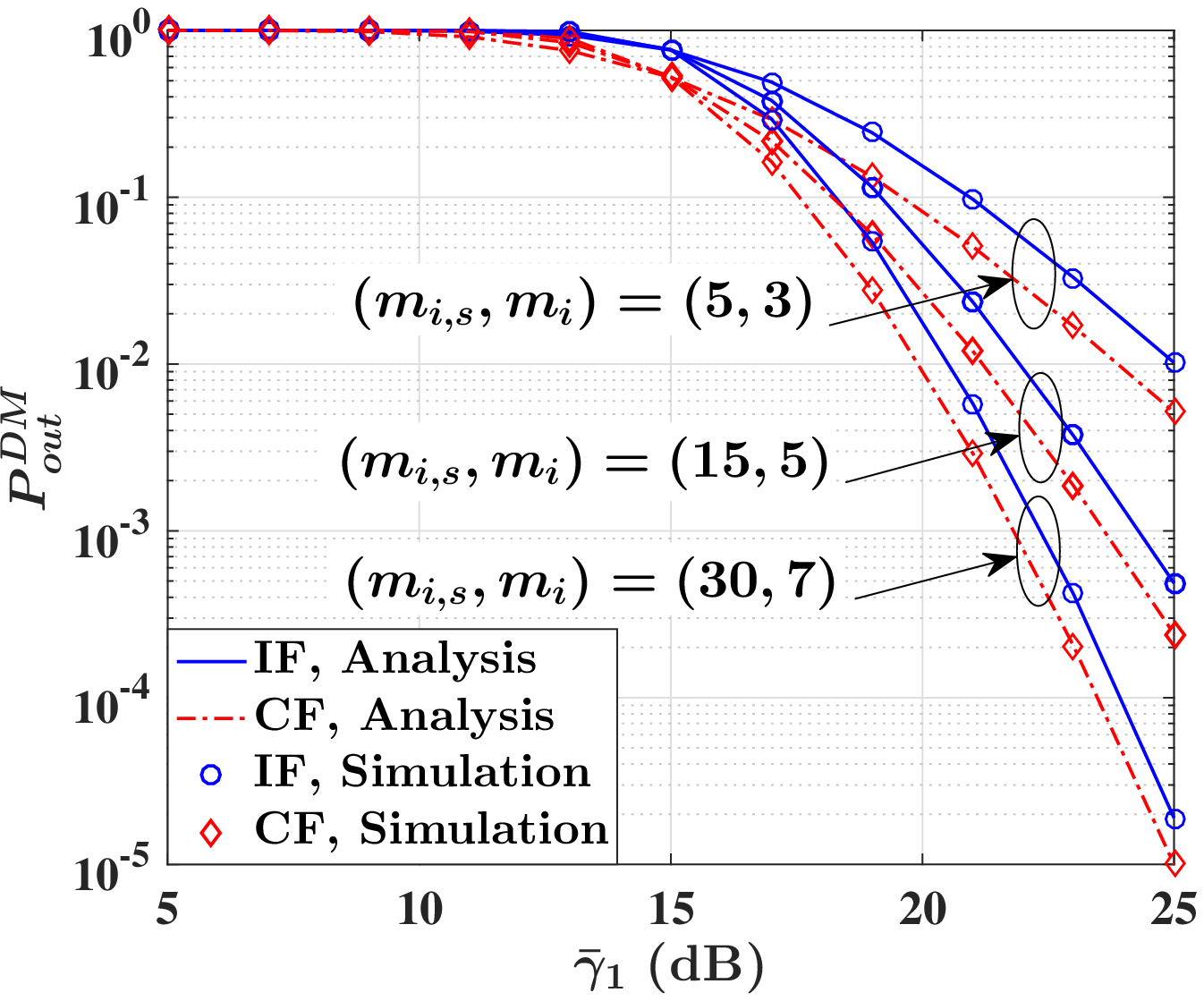}%
		\label{fig-side-m}%
	}\hspace{0cm}
	\caption{The OP versus average SNR $\bar{\gamma}_1$ over Fisher-Snedecor $\mathcal{F}$ fading MAC when: $R_t=2.5$, $m_{i,s}=3, m_i=2$, and $\theta=40$ for (a) and (b); $R_t=2.5$ and $\theta=40$ for (c) and (d).}\vspace{-0.6cm}
	\label{fig:ab}
\end{figure*}
\begin{figure*}
	\centering
	\subfigure[Clean MAC]{%
		\includegraphics[width=0.26\textwidth]{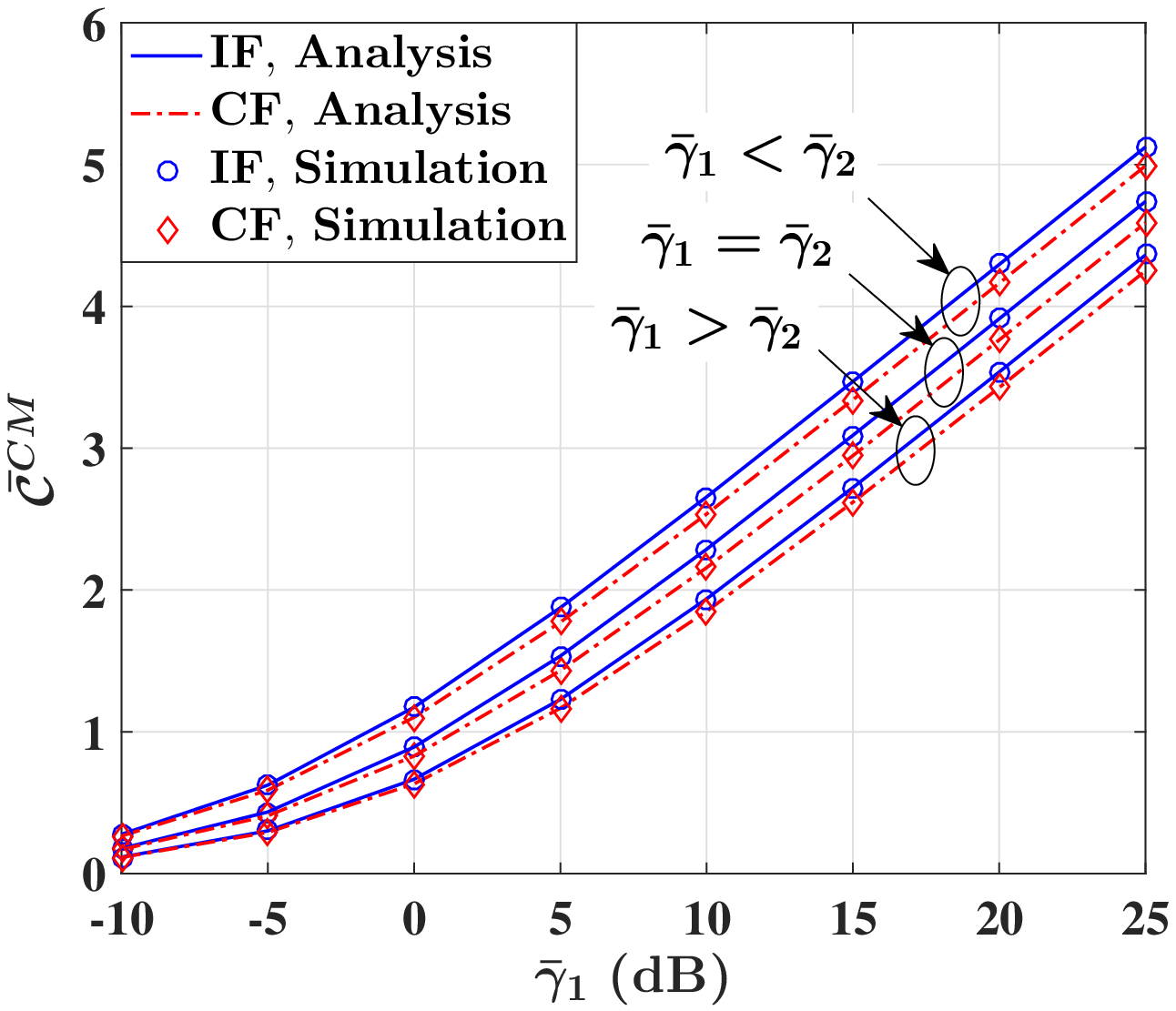}%
		\label{fig-ac-mac}%
	}\hspace{-0.37cm}
	\subfigure[Doubly dirty MAC]{%
		\includegraphics[width=0.26\textwidth]{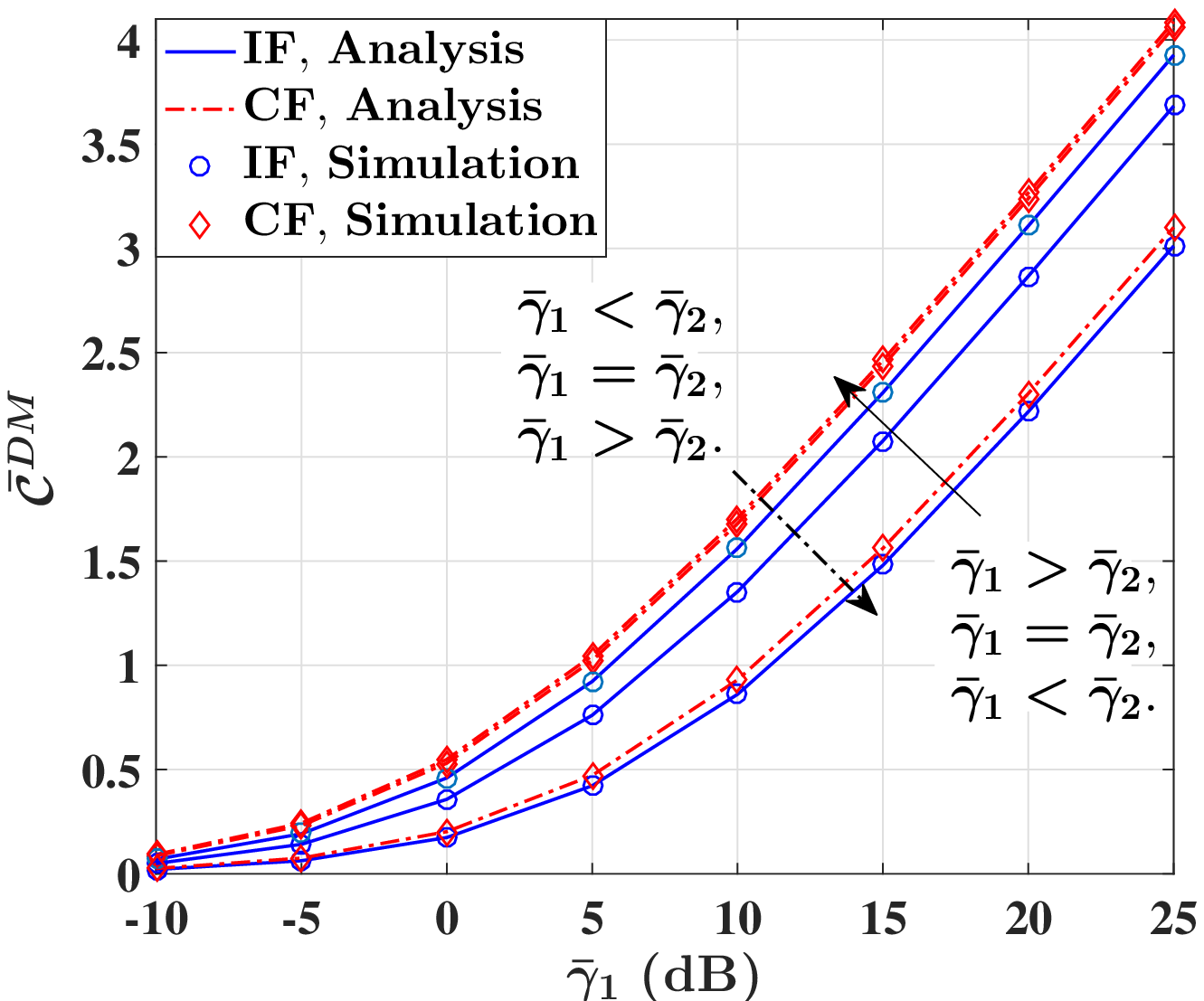}%
		\label{fig-ac-side}%
	}%
	\subfigure[Clean MAC]{%
		\includegraphics[width=0.26\textwidth]{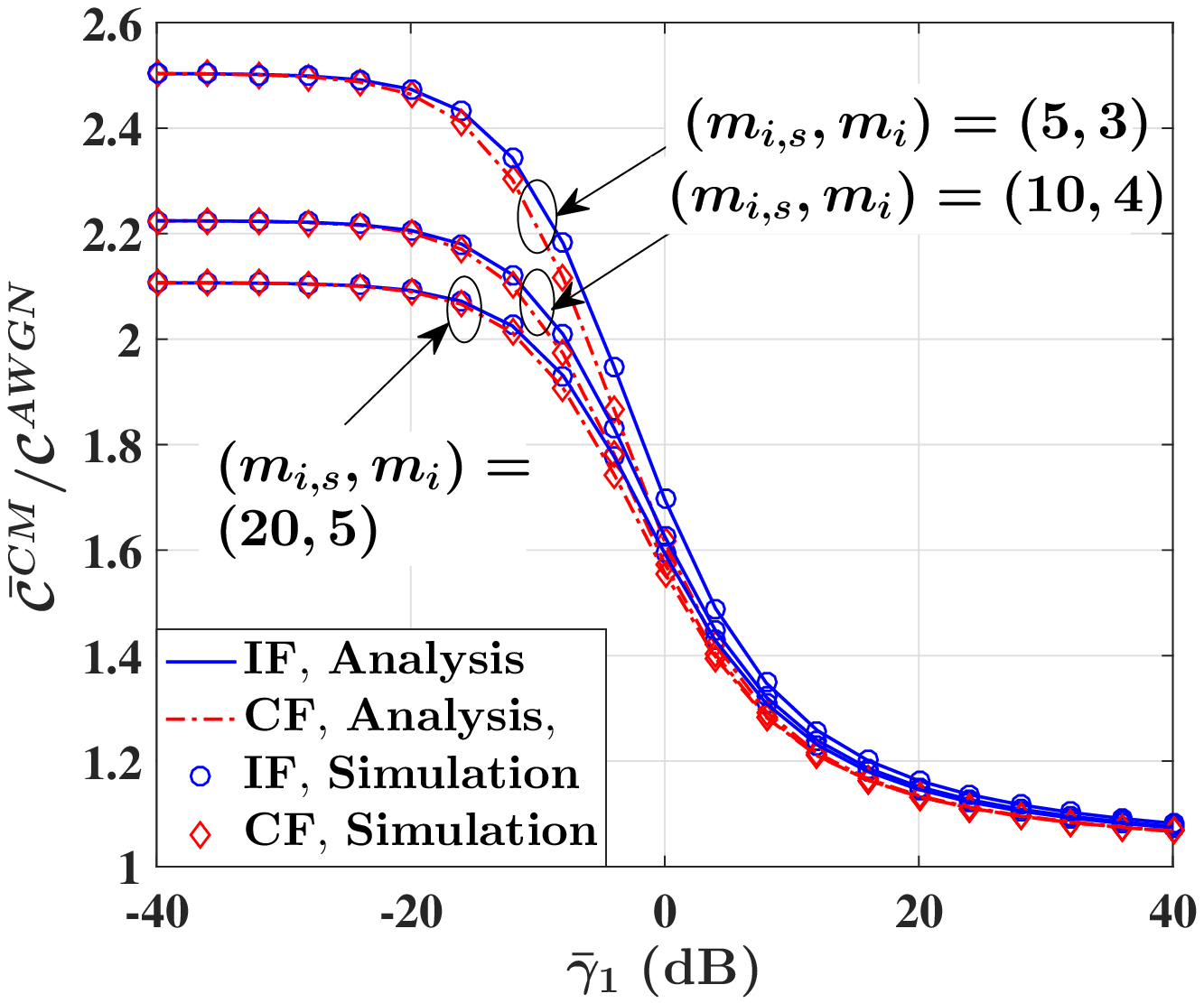}%
		\label{fig-awgn-mac}%
	}\hspace{-0.37cm}
	\subfigure[Doubly dirty MAC]{%
		\includegraphics[width=0.26\textwidth]{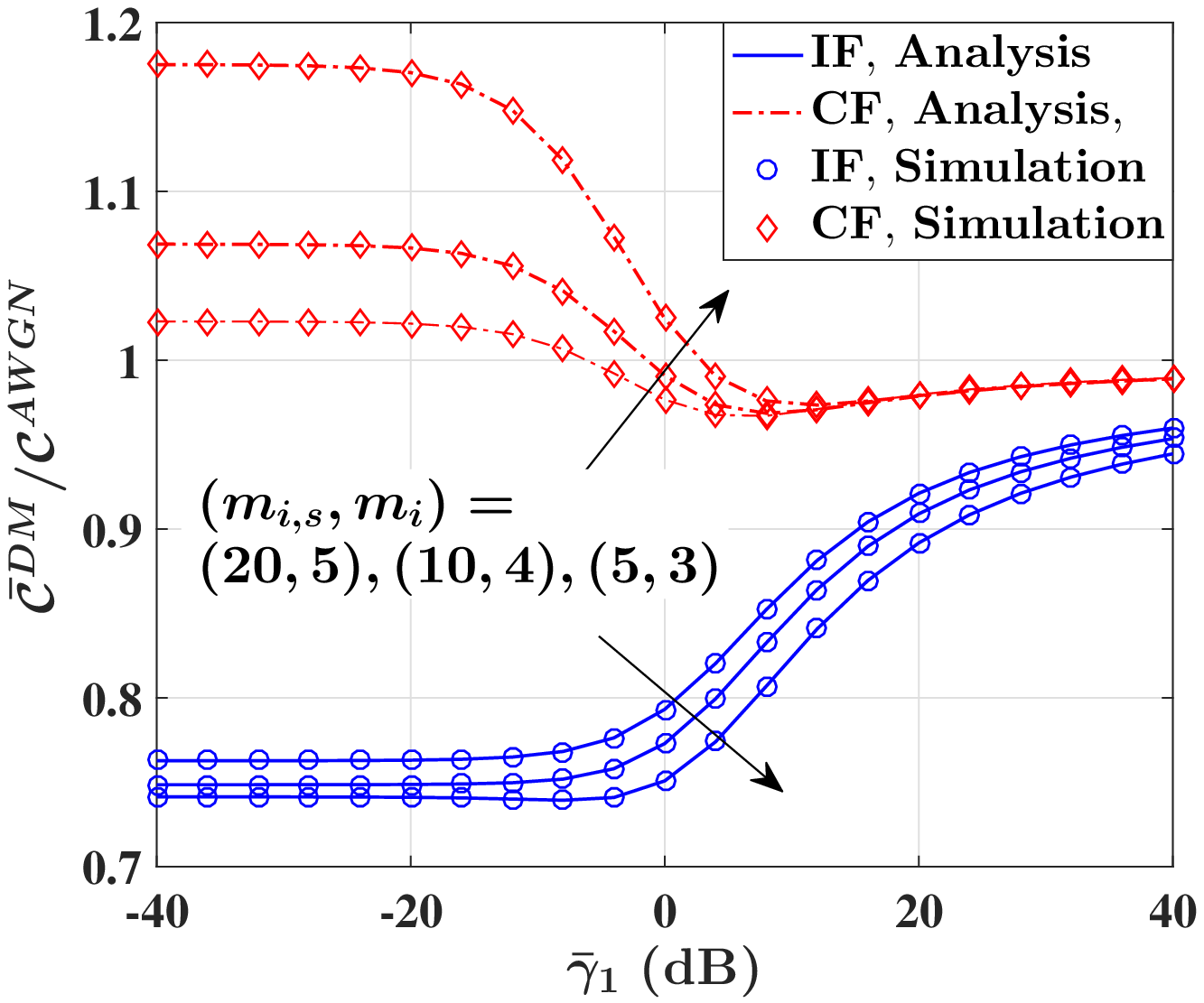}%
		\label{fig-awgn-side}%
	}\hspace{0cm}
	\caption{Capacity metrics versus $\bar{\gamma}_1$ over Fisher-Snedecor $\mathcal{F}$ fading MAC in different scenarios: (a), (b) AC versus $\bar{\gamma}_1$ when  $R_t=2.5$, $m_{i,s}=3, m_i=2$, and $\theta=40$; (c), (d) AC normalized to that of the AWGN case versus $\bar{\gamma}_1$ when $R_t=2.5$ and $\theta=40$.}\vspace{-0.3cm}
	\label{fig-c}
\end{figure*}
In this section, we evaluate the theoretical expressions previously
derived, which are double-checked in all instances with Monte Carlo (MC) simulations. We also set fading parameters $(m_{i,s},m_i)$ based on the results obtained in \cite{yoo2017fisher}. Besides, it should be noted that although the implementation of the extended generalized bivariate Meijer's G-function is not available in mathematical packages, like Mathematica, Maple, or MATLAB, it is computationally tractable and programmable as explained in \cite{peppas2012new}. Figs. \ref{fig-mac} and \ref{fig-side} represent the performance of OP over Fisher-Snedecor $\mathcal{F}$ clean MAC and doubly dirty MAC based on the variation of $\bar{\gamma}_1$ in the presence/absence of fading correlation for the fixed values of fading parameters, dependence parameter, and three different scenarios (i.e., $\bar{\gamma}_1<\bar{\gamma}_2$, $\bar{\gamma}_1=\bar{\gamma}_2$, and $\bar{\gamma}_1>\bar{\gamma}_2$), respectively. In all instances, it can be seen that the OP constantly decreases by increasing $\bar{\gamma}_1$, which is reasonable because the channel condition is improved. In Fig. \ref{fig-mac}, we can see that the independent fading (IF) case provides better performance as compared with the correlated fading (CF) case in terms of the OP in the clean MAC model. In contrast, Fig. \ref{fig-side} shows that the CF case has achieved better performance compared to the IF case in doubly dirty MAC.
The behavior of OP in terms of $\bar{\gamma}_1$ over Fisher-Snedecor $\mathcal{F}$ clean MAC and doubly dirty MAC for selected values of fading parameters and $\theta=40$ is illustrated in Figs. \ref{fig-mac-m} and \ref{fig-side-m}, respectively. In both models, we can see that as the shadowing and the fading are less severe, i.e., as $(m_{i,s},m_i)$ for $i\in\{1,2\}$ increase, the performance of OP is improved. In other words, the communication performance improves (degrades) in environments that exhibit light (heavy) shadowed fading characteristics. Figs. \ref{fig-ac-mac} and \ref{fig-ac-side} show the behavior of the AC under correlated/independent Fisher-Snedecor $\mathcal{F}$ fading clean MAC and doubly dirty MAC with $(m_{i,s},m_i)=(3,2)$. In Fig. \ref{fig-ac-mac}, we can see that the positive dependence structure is detrimental to the AC performance over clean MAC. In contrast, as shown in Fig. \ref{fig-ac-side}, the CF case provides higher values of the AC as opposed to the IC case in the doubly dirty MAC model. Given that capacity in the low/high SNR regimes highly depends on the fading severity and the dependency of fading coefficients, we now normalize the AC to that of the AWGN case for both clean and doubly dirty MAC models. From Fig. \ref{fig-awgn-mac}, it becomes evident that CF under positive dependence structure has destructive effects on the capacity performance compared to the IF case in the clean MAC model. Contrastingly, as shown in Fig. \ref{fig-awgn-side}, the CF provides a larger capacity in the doubly dirty MAC model, meaning that the IF is harmful to the capacity performance. We also see that this destructive effect becomes more noticeable under a strong fading condition (i.e., $(m_{i,s},m_i)=(5,3)$) than when a milder one (i.e., $(m_{i,s},m_i)=(20,5)$) is considered. It should be noted that the bottleneck effect imposed by the transmitters with a minimum SNR in the capacity region \eqref{cap-si}, is relaxed in the presence of fading correlation. Besides, it is important to highlight that this is in stark contrast with the observations made in clean MAC in the absence of interference, for which the opposite conclusion was obtained. Therefore, we see that considering the non-causally known SI at transmitters in the clean MAC (i.e., doubly dirty MAC) can improve the performance of OP and AC under the positive dependence structure.

In order to gain more insights into the effect of dependence structure in Fisher-Snedecor $\mathcal{F}$ fading, Table \ref{tab} provides the measure of dependency between fading coefficients in terms of the correlation coefficient $\rho\overset{\Delta}{=}\mathrm{cov}[\gamma_1\gamma_2]/\sqrt{\mathrm{var}[\gamma_1]\mathrm{var}[\gamma_2]}$, by simulating the dependence parameter of Clayton copula $\theta$. To this end, we consider two scenarios: (\textit{a}) the fading parameters are equal for both channels; (\textit{b}) the fading parameters are different for each channel (i.e., one of the channels experiences more shadowing and fading severity). For the first scenario, we present four cases: (\textit{i}) both $m_i$ and $m_{i,s}$ are fixed; (\textit{ii}) $m_i$ are fixed and $m_{i,s}$ are varied; (\textit{iii}) $m_i$ are varied and $m_{i,s}$ are fixed; (\textit{iv}) both $m_i$ and $m_{i,s}$ are varied. For the second scenario, $m_i$ and $m_{i,s}$ are different so that the first channel is under heavier shadowing and fading environment. For all cases in both scenarios, it can be seen that the dependence structure of the channel coefficients in Fisher-Snedecor $\mathcal{F}$ fading highly depends on the fading and copula parameters, where the correlation coefficient $\rho$ is increased as $\theta$ and $(m_i,m_{i,s})$ grows. Noteworthy, 
when $\theta$ goes to larger values (e.g., $\theta=40$) 
the perfect positive correlation (i.e., $\rho\rightarrow 1$) is reached. Furthermore, the dependence between SNRs $\gamma_1$ and $\gamma_2$ in Table \ref{tab} can be intuitively obtained from a scatter plot, i.e., representing the realizations of $\gamma_1$ versus those of $\gamma_2$. For informative purposes, we represent the scatter plots corresponding to independent and positive dependence structures between $\gamma_1$ and $\gamma_2$ using Clayton copula in Fig. \ref{fig-scatter}. It can be seen the data scattering decreases as $\theta$ grows, meaning that a stronger correlation is reached for higher values of $\theta$.  We can also see that the Clayton copula can efficiently justify the heavy concentration in the left tail, which indicates that the Clayton copula is an appropriate choice for performance analysis of the proposed model due to the fact that deep fade happens in tails. \vspace{0cm}
\begin{figure*}
	\centering
	\subfigure[$\theta\rightarrow 0$]{%
		\includegraphics[width=0.26\textwidth]{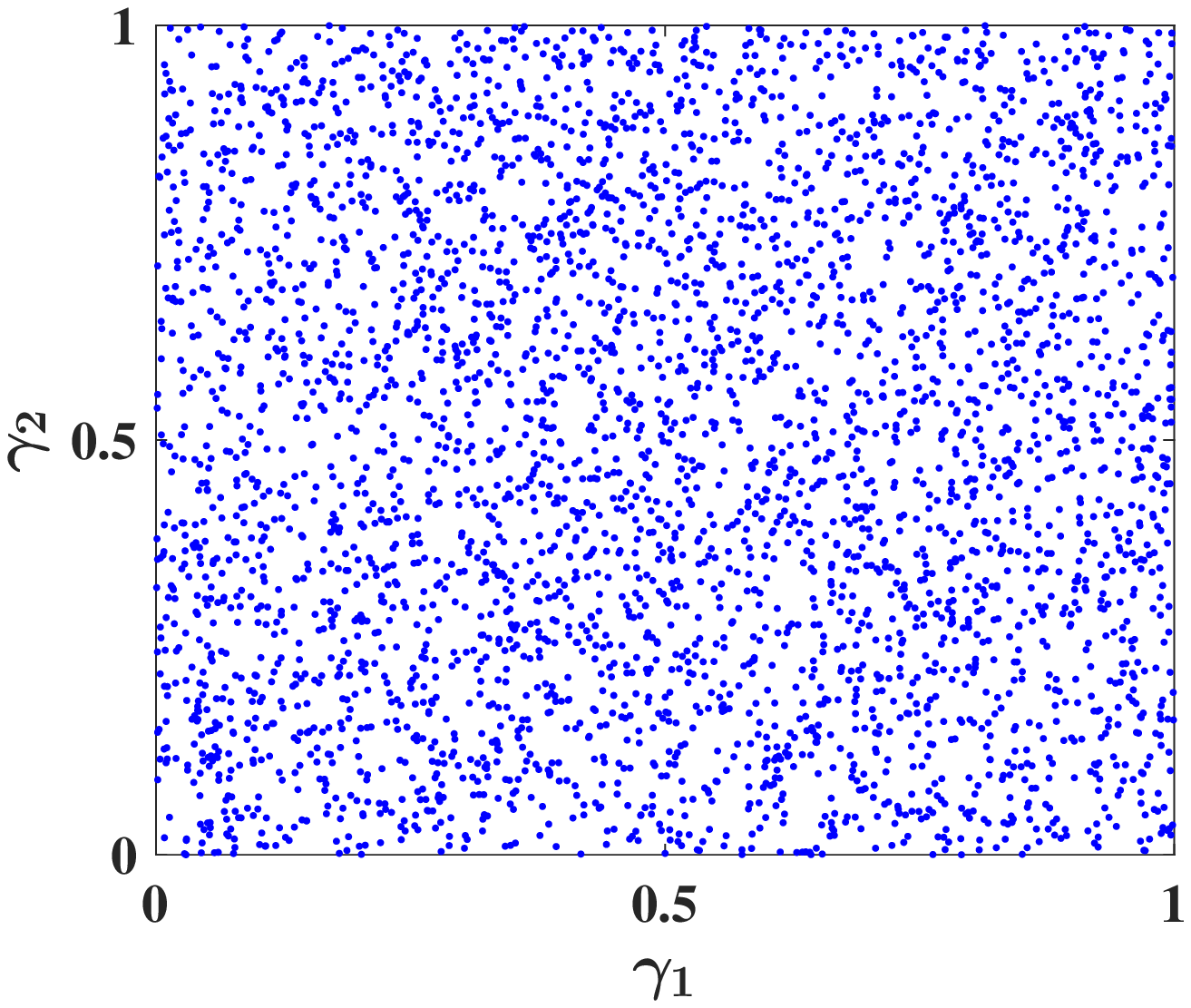}%
		\label{}%
	}\hspace{-0.37cm}
	\subfigure[$\theta=10$]{%
		\includegraphics[width=0.26\textwidth]{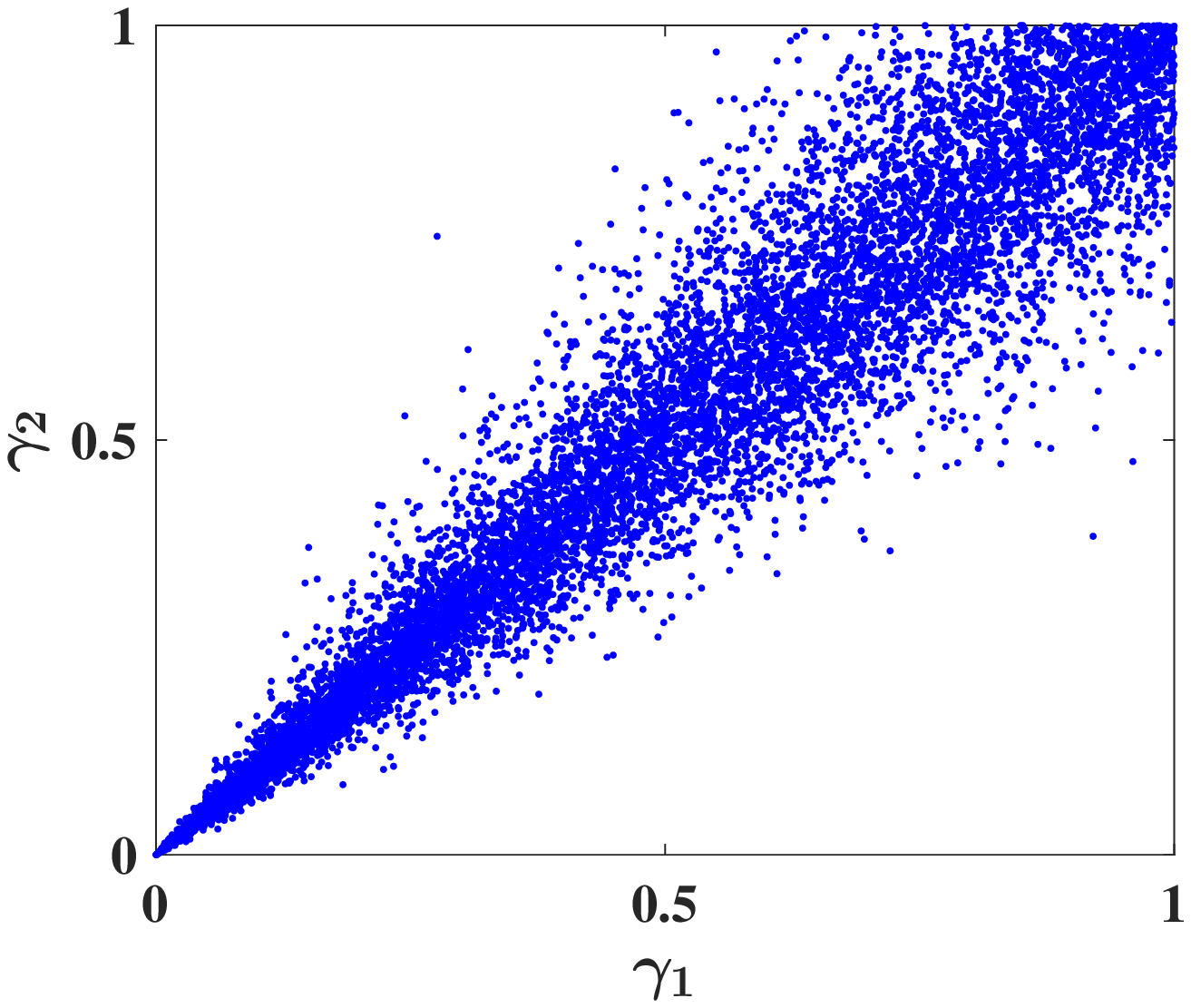}%
		\label{}%
	}\hspace{-0.37cm}
	\subfigure[$\theta=25$]{%
		\includegraphics[width=0.26\textwidth]{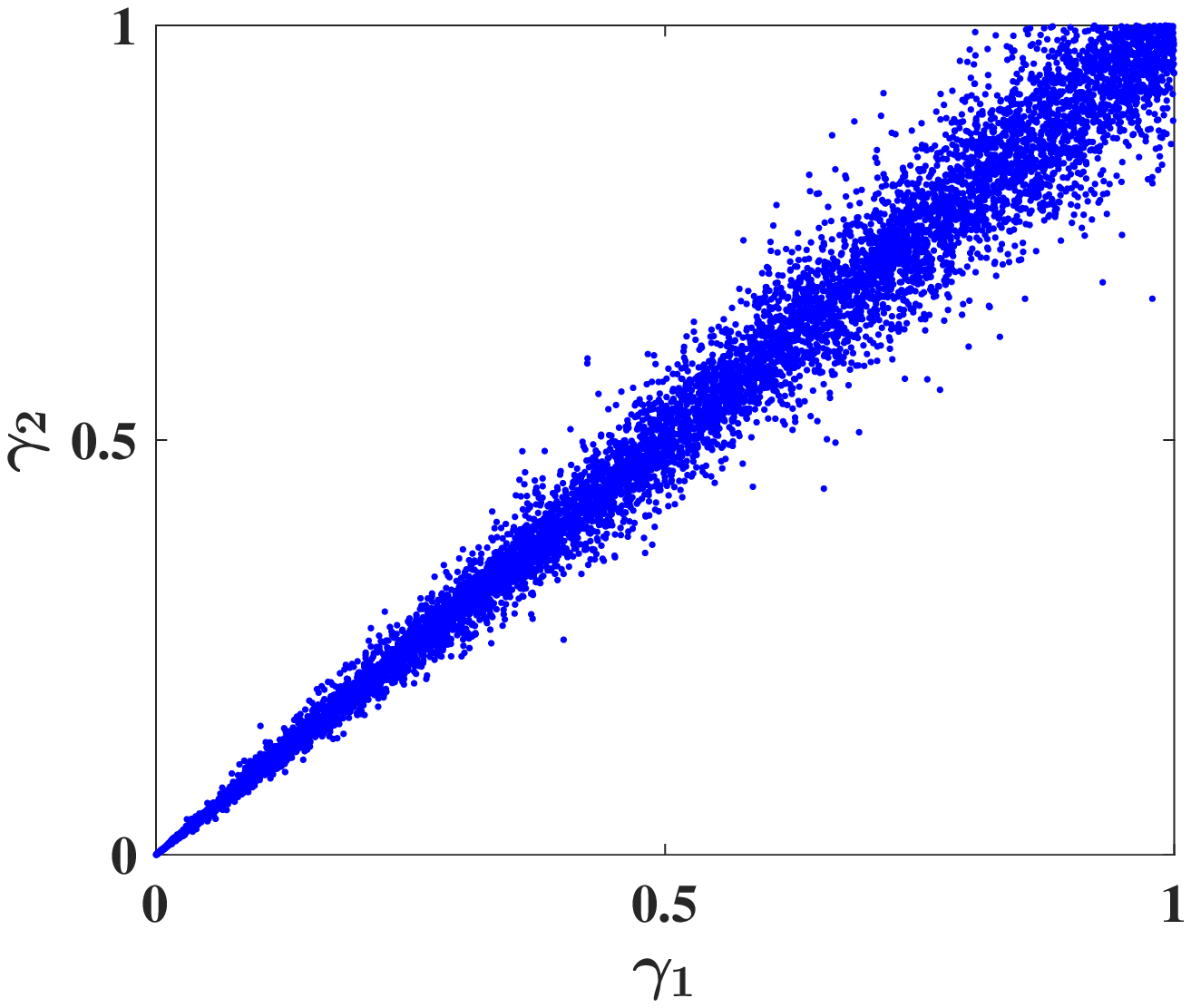}%
		\label{}%
	}\hspace{-0.37cm}
	\subfigure[$\theta=40$]{%
		\includegraphics[width=0.26\textwidth]{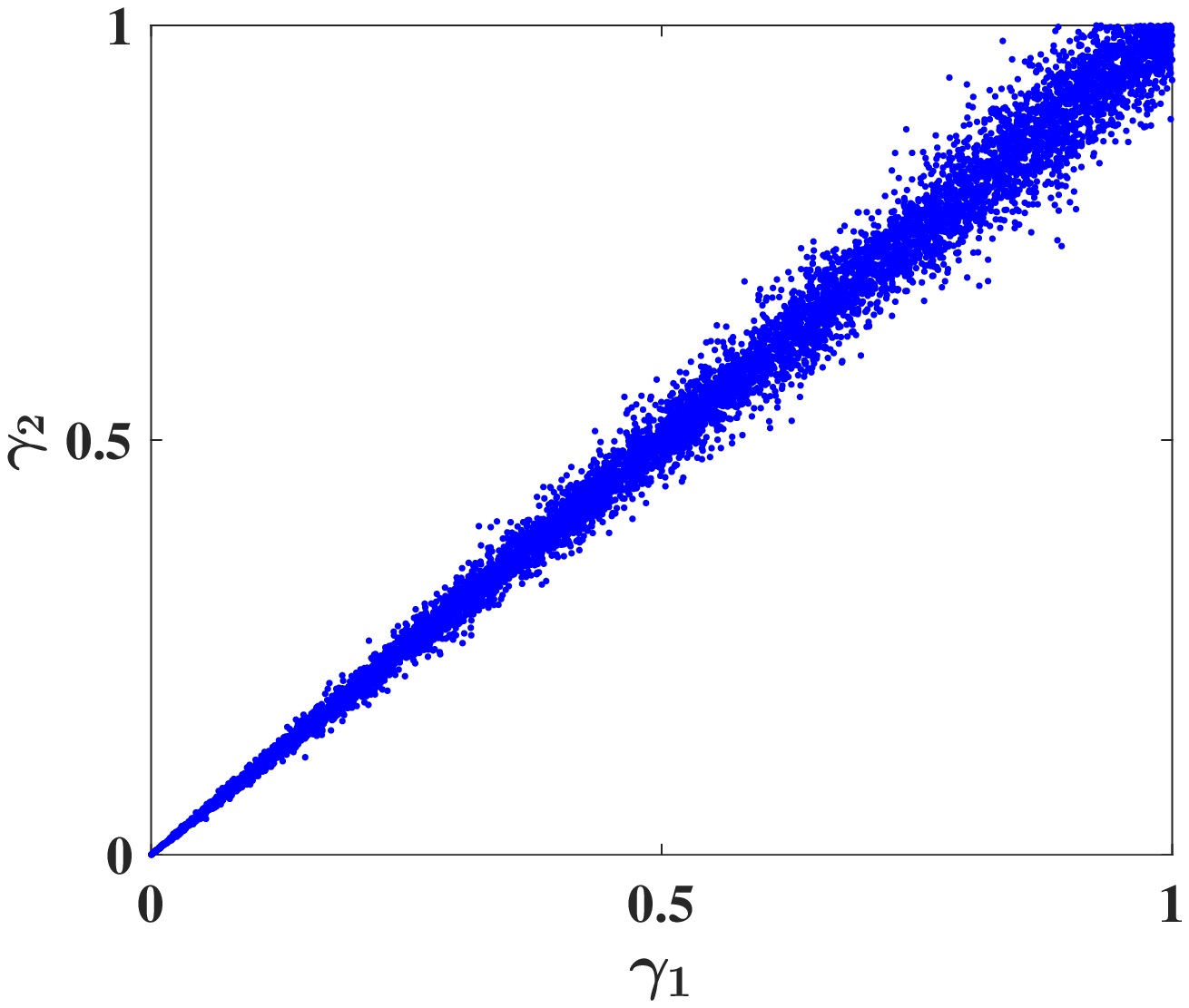}%
		\label{}%
	}%
	\caption{Scatter plots with independent and positive dependence between SNRs $\gamma_1$ and $\gamma_2$
			under Clayton copula when $(m_{i,s}; m_i) = (5,3)$: (a) $\theta\rightarrow 0$ (i.e., independent); (b) $\theta=10$; (c)
			$\theta=25$; (d) $\theta=40$.}\label{fig-scatter}\vspace{-0.5cm}
\end{figure*}
\begin{table}\scriptsize
\begin{center}
			\caption{The values of correlation coefficients $\rho$ in terms of fading parameters $(m_{i,s},m_i)$ for selected values of the dependence parameter $\theta$ under two different scenarios (\textit{a}) and (\textit{b})}.\vspace{-0.5cm}
	\begin{tabular}{ |c|c|c|c|c|c|c|c| }
		\hline
		\hspace{-1ex}Scenario (\textit{a})\hspace{-1ex} & $m_i$ & $m_{i,s}$ & $\theta$ & $\rho$ \\
		\hline
		&2 & 2 & 10, 25, 40  & 0.1300, 0.2293, 0.2812  \\ 
		(\textit{i}) &5 & 5 & 10, 25, 40 & 0.6982, 0.8149, 0.8574 \\ 
		&7 & 7 & 10, 25, 40 & 0.7696, 0.8732, 0.9084 \\ 
		\hline
		&2 & 3 & 10, 25, 40  & 0.4791, 0.6107, 0.6581  \\ 
		(\textit{ii}) &2 & 5 & 10, 25, 40 & 0.6796, 0.7995, 0.8493 \\ 
		&2 & 20 & 10, 25, 40 & 0.8108, 0.9092, 0.9383 \\ 
		\hline
		&3 & 3 & 10, 25, 40 & 0.4380, 0.6184, 0.6637 \\ 
		(\textit{iii}) &5 & 3 & 10, 25, 40 & 0.4396, 0.6285, 0.6665  \\
		&7 & 3 & 10, 25, 40 & 0.4884, 0.6358, 0.6822  \\
	    \hline
	    &3 & 5 & 10, 25, 40 & 0.6916, 0.8151, 0.8568  \\
	    (\textit{iv}) &5 & 15 & 10, 25, 40 & 0.8316, 0.9205, 0.9467  \\
	    &7 & 30 & 10, 25, 40 & 0.8664, 0.9410, 0.9624  \\
	    \hline
	    \hspace{-1ex}Scenario (\textit{b})\hspace{-1ex} &\hspace{-1.5ex}($m_{1,s},m_{1}$)\hspace{-1.5ex} & \hspace{-1.5ex}($m_{2,s},m_2$)\hspace{-1.5ex} & 10, 25, 40 & 0.8664, 0.9410, 0.9624  \\
	    \hline
	    &(3,2) & (5,3) & 10, 25, 40 & 0.5780, 0.7045, 0.7471  \\
	    (\textit{b}) &(5,3) & (15,5) & 10, 25, 40 & 0.7528, 0.8513, 0.8847  \\
	    &(15,5) & (30,7) & 10, 25, 40 & 0.8473, 0.9277, 0.9526  \\
	    \hline
	\end{tabular}
\label{tab}
\end{center}\vspace{0cm}
\end{table}\vspace{0cm}
\section{Conclusion}\label{conclusion}\vspace{0cm}
In this paper, we analyzed the performance of wireless multiple access communication systems under correlated/independent Fisher-Snedecor $\mathcal{F}$ fading conditions. To this end, we derived the exact analytical expressions for the OP and the AC in the correlated/independent fading clean and doubly dirty MAC scenarios, exploiting copula theory. Besides, to get a clearer insight into the measure of fading dependence, we studied the correlated fading case in 
the positive dependence structure applying the Clayton copula. The results showed that considering the fading correlation has a destructive effect on the performance of clean MAC in terms of the OP and the AC, while in doubly dirty MAC which transmitters include non-causally known SI, the performance of OP and AC improves under correlated fading conditions for positive dependence structures. Moreover, regarding the capability of copula theory in describing the dependence structures of fading channels, analyzing other multi-user communication systems such as interference channel (IC) will be significant as future works. \vspace{0cm} 
	\appendices
	\section{Proof of Theorem \ref{p-out-mac}}\label{app-out-mac}
	By exploiting the Parseval relation for Mellin transform \cite[Eq. (8.3.23)]{debnath2014integral}, \eqref{app-out} can be rewritten as follows:
	\begin{align}
		P_{out}^{CM}&=\int_{0}^{\infty}F_1(\gamma_0)f_2({\gamma_2})d\gamma_2,\\
		&=\frac{1}{2\pi j}\int_{\mathcal{L}_1}\mathcal{M}\left[F_1(\gamma_0),1-s\right]\mathcal{M}\left[f_2(\gamma_2),s\right]ds,\label{ap3}
	\end{align}
	where $\mathcal{L}_1$ is the integration path from $\nu-j\infty$ to $\nu+j\infty$ for a constant value of $\nu$ \cite{yoo2017fisher}. Then, by exploiting the definition of Meijer's G-function, we have:
	\begin{align}
		&\mathcal{M}\left[F_1(\gamma_0),1-s\right]=\int_{0}^{\infty}{\gamma_2}^{-s}F_1({\gamma_0})d\gamma_2,\\\nonumber
		&=\frac{\mathcal{B}_1}{2\pi j}\int_{\mathcal{L}_2}\frac{\Gamma(-\zeta)\Gamma(m_1+\zeta)\Gamma(m_{1,s}-\zeta)}{\Gamma(1-\zeta){\lambda_1}^\zeta}d\zeta\\
		&\;\;\;\;\times\int_{0}^{\infty}{\gamma_2}^{-s}{\gamma_0}^{-\zeta}d\gamma_2,\label{ap1}
	\end{align}\vspace{0cm}
	where, the inner integral can be obtained as:
	\begin{align}
		&\int_{0}^{\infty}{\gamma_2}^{-s}{\gamma_0}^{-\zeta}d\gamma_2\overset{(a)}{=}{\gamma_t}^{-\zeta}\int_{0}^{\infty}\frac{{\gamma_2}^{-s}}{\left(1-\frac{\gamma_2}{\gamma_t}\right)^\zeta}d\gamma_2,\\
		&\overset{(b)}{=}{\gamma_t}^{-\zeta}\mathit{B}\left(1-s,\zeta+s-1\right)\left(\frac{-1}{\gamma_t}\right)^{s-1},\\
		&\overset{(c)}{=}{\gamma_t}^{-\zeta}\frac{\Gamma(1-s)\Gamma(\zeta+s-1)}{\Gamma(\zeta)}\left(\frac{-1}{\gamma_t}\right)^{s-1},\label{ap2}
	\end{align}
	where $(a)$ is obtained representing $\gamma_0=\gamma_t-\gamma_2$, $(b)$ is derived form \cite[eq. (3.194.3)]{gradshteyn2014table}, and $(c)$ is obtained by utilizing the property of beta function where $B(a_1,b_1)=\frac{\Gamma(a_1)\Gamma(b_1)}{\Gamma(a_1+b_1)}$. Thus, by inserting \eqref{ap2} into \eqref{ap1}, we have:
	\begin{align}\nonumber
		&\mathcal{M}\left[F_1(\gamma_0),1-s\right]=\frac{\mathcal{B}_1\Gamma(1-s)}{2\pi j}\left(\frac{-1}{\gamma_t}\right)^{s-1}\\
		&\times\int_{\mathcal{L}_2}\frac{\Gamma(-\zeta)\Gamma(m_1+\zeta)\Gamma(m_{1,s}-\zeta)\Gamma(\zeta+s-1)}{\Gamma(\zeta)\Gamma(1-\zeta){\left(\gamma_t\lambda_1\right)}^\zeta}d\zeta,\\
		&=\frac{\mathcal{B}_1\Gamma(1-s)}{({-\gamma_t})^{s-1}2\pi j}
		G_{3,3}^{2,2}\left(\begin{array}{c}
			\gamma_t\lambda_1\end{array}
		\Bigg\vert\begin{array}{c}
			(1-m_{1,s},1,0)\\
			(s-1,m_1,0)\\
		\end{array}\right),\label{ap5}
	\end{align}
where $\mathcal{L}_2$ is a specific counter that separates the poles of $\Gamma(-\zeta)$ from the poles of $\Gamma(m_1+\zeta)$. 
Next, by exploiting \cite[Eq. (2.9)]{mathai2009h}, $\mathcal{M}[f_2(\gamma_2),s]$ can be computed as:
	\begin{align}
		\mathcal{M}\left[f_2(\gamma_2),s\right]=\mathcal{A}_2\frac{\Gamma(m_2-1+s)\Gamma(1+m_{2,s}-s)}{{\lambda_2}^s}.\label{ap6}
	\end{align}
	Now, by plugging \eqref{ap5} and \eqref{ap6} into \eqref{ap3}, $P^{CM}_{out}$ can be written as
	\begin{align}
		&\hspace{-2ex}P_{out}=\frac{\gamma_t\mathcal{B}_1\mathcal{A}_2}{4\pi^2}\int_{\mathcal{L}_1}\int_{\mathcal{L}_2}\frac{\Gamma(-\zeta)\Gamma(m_1+\zeta)\Gamma(m_{1,s}-\zeta)}{\Gamma(\zeta)\Gamma(1-\zeta){\left(\gamma_t\lambda_1\right)}^\zeta}\\
		&\hspace{-2ex}\times \Gamma(\zeta+s-1)\Gamma(1-s)\frac{\Gamma(m_2-1+s)\Gamma(1+m_{2,s}-s)}{{\left(-\gamma_t\lambda_2\right)}^s}d\zeta ds,\nonumber
	\end{align}
and finally, exploiting the definition of bivariate Meijer's G-function, the proof is completed. 
	\section{Proof of Theorem \ref{thm-ac-mac}}\label{ac-mac}
	By substituting \eqref{pdf-gs} into \eqref{asc-def}, the ASC can be rewritten as:
	\begin{align}\nonumber
		\bar{\mathcal{C}}^{CM}=&\frac{\mathcal{A}_1\mathcal{A}_2}{\ln 2}\sum_{k=0}^{\infty}\frac{(-\lambda_2)^{k+1}}{k!}\\\nonumber
		&\times G_{3,3}^{2,3}\left(
		\begin{array}{c}
			\frac{-\lambda_1}{\lambda_2}\\
		\end{array}\Bigg\vert
		\begin{array}{c}
			(0,-m_{1,s},k-m_2+1)\\
			(m_1-1,k+m_{2,s},k)\\
		\end{array}\right)\\
	&\times\underset{\mathcal{I}}{\underbrace{{\int_{0}^{\infty}\ln(1+\gamma_s)\gamma_s^kd\gamma_s}}}.\label{as-cm-app}
	\end{align}
Next, by re-expressing the logarithm function in terms of the
Meijer's G-function \cite{prudnikov1990more}, i.e., 
	\begin{align}
		\ln(1+\gamma_s)=
		G_{2,2}^{1,2}\left(\begin{array}{c}
			\gamma_s\end{array}
		\Bigg\vert\begin{array}{c}
			(1,1)\\
			(1,0)\\
		\end{array}\right),\label{log}
	\end{align}
and  using \cite[2.24.2.1]{prudnikov1990more}, the integral $\mathcal{I}$ is computed as:
\begin{align}
\mathcal{I}=\int_{0}^{\infty}
{\gamma}^k_sG_{2,2}^{1,2}\left(\begin{array}{c}
	\gamma_s\end{array}
\Bigg\vert\begin{array}{c}
	(1,1)\\
	(1,0)\\
\end{array}\right)d\gamma_s=\left[\frac{\Gamma\left(-\left(k+1\right)\right)}{\Gamma\left(-k\right)}\right]^2.\label{i}
\end{align}
Now, by inserting \eqref{i} into \eqref{as-cm-app}, the proof is completed. 
\section{Proof of Theorem \ref{thm-ac-dd}}\label{app-ac-dd}
By substituting \eqref{pdf1} into \eqref{j1} and expressing logarithm function in terms of the Meijer's G-function as provided in \eqref{log}, $\mathcal{J}_1$ can be rewritten as:
\begin{align}\nonumber
	\mathcal{J}_1=&\frac{\mathcal{A}_1}{2\lambda_1\ln2}\int_{0}^{\infty}G_{2,2}^{1,2}\left(\begin{array}{c}
		\gamma_n\end{array}
	\Bigg\vert\begin{array}{c}
		(1,1)\\
		(1,0)\\
	\end{array}\right)\\
&\times G_{1,1}^{1,1}\left(
	\begin{array}{c}
		\lambda_1\gamma_n\\
	\end{array}\Bigg\vert
	\begin{array}{c}
		-m_{1,s}\\
		m_1-1\\
	\end{array}\right)d\gamma_n.
\end{align}
Now, with the help of \cite[eqs. (2.25.1.1) and
(8.3.2.21)]{prudnikov1990more}, $\mathcal{J}_1$ is obtained as:
\begin{align}
\mathcal{J}_1=\frac{\mathcal{A}_1}{2\lambda_1\ln2}G_{3,3}^{2,3}\left(
\begin{array}{c}
	\frac{1}{\lambda_1}\\
\end{array}\Bigg\vert
\begin{array}{c}
	(1,1,1-m_{1})\\
	(1,m_{1,s},0)\\
\end{array}\right).
\end{align}

By inserting \eqref{pdf1} and \eqref{cdf1} into \eqref{j2}, considering \eqref{log}, and exploiting the definition of Meijer's G-function, $\mathcal{J}_2$ can be determined as:
\begin{align}\nonumber
	&\mathcal{J}_2=\frac{\mathcal{A}_1\mathcal{B}_2}{2\ln2}\int_{0}^{\infty}G_{2,2}^{1,2}\left(\begin{array}{c}
		\lambda_2\gamma_n\end{array}
	\Bigg\vert\begin{array}{c}
		(1-m_{2,s},1)\\
		(m_2,0)\\
	\end{array}\right)\\
	&\times G_{2,2}^{1,2}\left(\begin{array}{c}
		\gamma_n\end{array}
	\Bigg\vert\begin{array}{c}
		(1,1)\\
		(1,0)\\
	\end{array}\right)G_{1,1}^{1,1}\left(
	\begin{array}{c}
		\lambda_1\gamma_n\\
	\end{array}\Bigg\vert
	\begin{array}{c}
		-m_{1,s}\\
		m_1-1\\
	\end{array}\right)d\gamma_n,\\\nonumber
	&=\frac{\mathcal{A}_1\mathcal{B}_2}{2\ln2}\int_{\mathcal{L}_1}\frac{\Gamma(m_2+s)\Gamma(m_{2,s}-s)\Gamma(-s)}{\lambda^s_2\Gamma(1-s)}\\
	&\hspace{-0.3cm}\footnotesize\times\underset{\mathcal{K}}{\underbrace{\int_{0}^{\infty}\gamma^{-s}_nG_{2,2}^{1,2}\left(\begin{array}{c}
				\gamma_n\end{array}
			\Bigg\vert\begin{array}{c}
				(1,1)\\
				(1,0)\\
			\end{array}\right)G_{1,1}^{1,1}\left(
			\begin{array}{c}
				\lambda_1\gamma_n\\
			\end{array}\Bigg\vert
			\begin{array}{c}
				-m_{1,s}\\
				m_1-1\\
			\end{array}\right)d\gamma_nds}},\label{j21}
\end{align}
where $\mathcal{L}_1$ is a certain contour separating the poles of $\Gamma(m_2+s)$ from the poles of $\Gamma(-s)$. Next, by utilizing the Mellin transform for the product of two Meijer's G-functions
\cite[eq. (2.25.1.1)]{prudnikov1990more}, the inner integral $\mathcal{K}$ can be computed as follows:
\begin{align}
	\mathcal{K}={\lambda}^{s-1}_1G_{3,3}^{2,3}\left(
	\begin{array}{c}
		\frac{1}{\lambda_1}\\
	\end{array}\Bigg\vert
	\begin{array}{c}
		(1,1,s-1-m_1)\\
		(1,m_{1,s}+s,0)\\
	\end{array}\right),\label{k}
\end{align}
subsequently, by applying the definition of univariate Meijer's G-function to \eqref{k}, then plugging the obtained
result into \eqref{j21} and performing the change of variables
$s=-s$ and $\zeta=-\zeta$, we have:
\begin{align}\nonumber
	\mathcal{J}_2=&-\frac{\mathcal{A}_1\mathcal{B}_2}{8\lambda_1{\pi^2}\ln2}\int_{\mathcal{L}_1}\int_{\mathcal{L}_2}\frac{\Gamma(m_{1,s}-s-\zeta)\Gamma(s)}{\Gamma(1+s)\Gamma(1+\zeta)}\\\nonumber
	&\times\Gamma(m_1+s+\zeta)\Gamma(m_2-s)\Gamma(m_{2,s}+s)\\
	&\times\Gamma^2(\zeta)\Gamma(1-\zeta)\left(\frac{1}{\lambda_1}\right)^\zeta\left(\frac{\lambda_2}{\lambda_1}\right)^sd\zeta ds,
\end{align}
where $\mathcal{L}_2$ is another contour. Consequently, recognizing the  definition of bivariate Meijer’s G-function \cite{gupta1969integrals}, $\mathcal{J}_2$ is derived as:
\begin{align}\nonumber
	&\mathcal{J}_2=\frac{\mathcal{A}_1\mathcal{B}_2}{2\lambda_1\ln 2}\\
	&\times G_{1,1:2,2:2,2}^{1,1:1,2:1,2}\left(\begin{array}{c}
		\frac{\lambda_2}{\lambda_1},\frac{1}{\lambda_1}\end{array}
	\Bigg\vert\begin{array}{c}
		(m_1)\\
		(m_{1,s})\\
	\end{array}\Bigg\vert
	\begin{array}{c}
		(1-m_{2,s},1)\\
		(m_2,0)\\
	\end{array}\Bigg\vert
	\begin{array}{c}
		(1,1)\\
		(1,0)\\
	\end{array}\right).
\end{align}
Similarly, following the same methodology, $\mathcal{J}_3$ and $\mathcal{J}_4$ are respectively determined as:
\begin{align}
\mathcal{J}_3=\frac{\mathcal{A}_2}{2\lambda_2\ln2}G_{3,3}^{2,3}\left(
	\begin{array}{c}
		\frac{1}{\lambda_2}\\
	\end{array}\Bigg\vert
	\begin{array}{c}
		(1,1,1-m_{2})\\
		(1,m_{2,s},0)\\
	\end{array}\right),\label{j3}
\end{align}
\begin{align}\nonumber
	&\mathcal{J}_4=\frac{\mathcal{B}_1\mathcal{A}_2}{2\lambda_2\ln 2}\\
	&\times G_{1,1:2,2:2,2}^{1,1:1,2:1,2}\left(\begin{array}{c}
		\frac{\lambda_1}{\lambda_2},\frac{1}{\lambda_2}\end{array}
	\Bigg\vert\begin{array}{c}
		(m_2)\\
		(m_{2,s})\\
	\end{array}\Bigg\vert
	\begin{array}{c}
		(1-m_{1,s},1)\\
		(m_1,0)\\
	\end{array}\Bigg\vert
	\begin{array}{c}
		(1,1)\\
		(1,0)\\
	\end{array}\right).
\end{align}
Finally, by substituting $\mathcal {J}_l$ for $l\in\{1,2,3,4\}$ into \eqref{j1-j4}, the proof is accomplished.
\bibliographystyle{IEEEtran}
\bibliography{sample.bib}

\begin{thebibliography}{10}
\providecommand{\url}[1]{#1}
\csname url@samestyle\endcsname
\providecommand{\newblock}{\relax}
\providecommand{\bibinfo}[2]{#2}
\providecommand{\BIBentrySTDinterwordspacing}{\spaceskip=0pt\relax}
\providecommand{\BIBentryALTinterwordstretchfactor}{4}
\providecommand{\BIBentryALTinterwordspacing}{\spaceskip=\fontdimen2\font plus
\BIBentryALTinterwordstretchfactor\fontdimen3\font minus
  \fontdimen4\font\relax}
\providecommand{\BIBforeignlanguage}[2]{{%
\expandafter\ifx\csname l@#1\endcsname\relax
\typeout{** WARNING: IEEEtran.bst: No hyphenation pattern has been}%
\typeout{** loaded for the language `#1'. Using the pattern for}%
\typeout{** the default language instead.}%
\else
\language=\csname l@#1\endcsname
\fi
#2}}
\providecommand{\BIBdecl}{\relax}
\BIBdecl

\bibitem{zhangb2021intelligent}
S.~Zhangb and R.~Zhangca, ``Intelligent reflecting surface aided multi-user
  communication: Capacity region and deployment strategy,'' \emph{IEEE
  Transactions on Communications}, 2021.

\bibitem{saad2019vision}
W.~Saad, M.~Bennis, and M.~Chen, ``A vision of 6g wireless systems:
  Applications, trends, technologies, and open research problems,'' \emph{IEEE
  network}, vol.~34, no.~3, pp. 134--142, 2019.

\bibitem{shannon1958channels}
C.~E. Shannon, ``Channels with side information at the transmitter,'' \emph{IBM
  journal of Research and Development}, vol.~2, no.~4, pp. 289--293, 1958.

\bibitem{jafar2006capacity}
S.~Jafar, ``Capacity with causal and noncausal side information: A unified
  view,'' \emph{IEEE Transactions on Information Theory}, vol.~52, no.~12, pp.
  5468--5474, 2006.

\bibitem{philosof2008technical}
T.~Philosof, R.~Zamir, and U.~Erez, ``Technical report: Achievable rates for
  the mac with correlated channel-state information,'' \emph{arXiv preprint
  arXiv:0812.4803}, 2008.

\bibitem{philosof2011lattice}
T.~Philosof, R.~Zamir, U.~Erez, and A.~J. Khisti, ``Lattice strategies for the
  dirty multiple access channel,'' \emph{IEEE Trans. Inf Theory}, vol.~57,
  no.~8, pp. 5006--5035, 2011.

\bibitem{tse2005fundamentals}
D.~Tse and P.~Viswanath, \emph{Fundamentals of wireless communication}.\hskip
  1em plus 0.5em minus 0.4em\relax Cambridge university press, 2005.

\bibitem{atapattu2011mixture}
S.~Atapattu, C.~Tellambura, and H.~Jiang, ``A mixture gamma distribution to
  model the snr of wireless channels,'' \emph{IEEE transactions on wireless
  communications}, vol.~10, no.~12, pp. 4193--4203, 2011.

\bibitem{selim2015modeling}
B.~Selim, O.~Alhussein, S.~Muhaidat, G.~K. Karagiannidis, and J.~Liang,
  ``Modeling and analysis of wireless channels via the mixture of gaussian
  distribution,'' \emph{IEEE Transactions on Vehicular technology}, vol.~65,
  no.~10, pp. 8309--8321, 2015.

\bibitem{biglieri2016impact}
E.~Biglieri and I.-W. Lai, ``The impact of independence assumptions on wireless
  communication analysis,'' in \emph{2016 IEEE International Symposium on
  Information Theory (ISIT)}.\hskip 1em plus 0.5em minus 0.4em\relax IEEE,
  2016, pp. 2184--2188.

\bibitem{nelsen2007introduction}
R.~B. Nelsen, \emph{An introduction to copulas}.\hskip 1em plus 0.5em minus
  0.4em\relax Springer Science \& Business Media, 2007.

\bibitem{gholizadeh2015capacity}
M.~H. Gholizadeh, H.~Amindavar, and J.~A. Ritcey, ``On the capacity of mimo
  correlated nakagami-m fading channels using copula,'' \emph{EURASIP Journal
  on Wireless Communications and Networking}, vol. 2015, no.~1, pp. 1--11,
  2015.

\bibitem{ghadi2020copula}
F.~R. Ghadi and G.~A. Hodtani, ``Copula function-based analysis of outage
  probability and coverage region for wireless multiple access communications
  with correlated fading channels,'' \emph{IET Communications}, vol.~14,
  no.~11, pp. 1805--1811, 2020.

\bibitem{9464253}
F.~R. Ghadi, G.~A. Hodtani, and F.~J. Lopez-Martinez, ``The role of correlation
  in the doubly dirty fading mac with side information at the transmitters,''
  \emph{IEEE Wireless Communications Letters}, vol.~10, no.~9, pp. 2070--2074,
  2021.

\bibitem{jorswieck2020copula}
E.~A. Jorswieck and K.-L. Besser, ``Copula-based bounds for multi-user
  communications--part i: Average performance,'' \emph{IEEE Communications
  Letters}, 2020.

\bibitem{besser2020copula}
K.-L. Besser and E.~A. Jorswieck, ``Copula-based bounds for multi-user
  communications--part ii: Outage performance,'' \emph{IEEE Communications
  Letters}, 2020.

\bibitem{ghadi2020copula1}
F.~R. Ghadi and G.~A. Hodtani, ``Copula-based analysis of physical layer
  security performances over correlated rayleigh fading channels,'' \emph{IEEE
  Transactions on Information Forensics and Security}, vol.~16, pp. 431--440,
  2020.

\bibitem{besser2020bounds}
K.-L. Besser and E.~A. Jorswieck, ``Bounds on the secrecy outage probability
  for dependent fading channels,'' \emph{IEEE Transactions on Communications},
  vol.~69, no.~1, pp. 443--456, 2020.

\bibitem{ghadi2022capacity}
F.~R. Ghadi, F.~J. Martin-Vega, and F.~J. Lopez-Martinez, ``Capacity of
  backscatter communication under arbitrary fading dependence,'' \emph{IEEE
  Transactions on Vehicular Technology}, 2022.

\bibitem{yoo2017fisher}
S.~K. Yoo, S.~L. Cotton, P.~C. Sofotasios, M.~Matthaiou, M.~Valkama, and G.~K.
  Karagiannidis, ``The fisher--snedecor $\mathcal{F}$ distribution: A simple
  and accurate composite fading model,'' \emph{IEEE Communications Letters},
  vol.~21, no.~7, pp. 1661--1664, 2017.

\bibitem{badarneh2018sum}
O.~S. Badarneh, D.~B. Da~Costa, P.~C. Sofotasios, S.~Muhaidat, and S.~L.
  Cotton, ``On the sum of fisher--snedecor $\mathcal{F}$ variates and its
  application to maximal-ratio combining,'' \emph{IEEE Wireless Communications
  Letters}, vol.~7, no.~6, pp. 966--969, 2018.

\bibitem{kong2018physical}
L.~Kong and G.~Kaddoum, ``On physical layer security over the fisher-snedecor
  $\mathcal{F}$ wiretap fading channels,'' \emph{IEEE Access}, vol.~6, pp.
  39\,466--39\,472, 2018.

\bibitem{badarneh2018secrecy}
O.~S. Badarneh, P.~C. Sofotasios, S.~Muhaidat, S.~L. Cotton, K.~Rabie, and
  N.~Al-Dhahir, ``On the secrecy capacity of fisher-snedecor f fading
  channels,'' in \emph{2018 14th International Conference on Wireless and
  Mobile Computing, Networking and Communications (WiMob)}.\hskip 1em plus
  0.5em minus 0.4em\relax IEEE, 2018, pp. 102--107.

\bibitem{yoo2019comprehensive}
S.~K. Yoo, P.~C. Sofotasios, S.~L. Cotton, S.~Muhaidat, F.~J. Lopez-Martinez,
  J.~M. Romero-Jerez, and G.~K. Karagiannidis, ``A comprehensive analysis of
  the achievable channel capacity in $\mathcal{F}$ composite fading channels,''
  \emph{IEEE Access}, vol.~7, pp. 34\,078--34\,094, 2019.

\bibitem{zhao2019ergodic}
H.~Zhao, L.~Yang, A.~S. Salem, and M.-S. Alouini, ``Ergodic capacity under
  power adaption over fisher--snedecor $\mathcal{F}$ fading channels,''
  \emph{IEEE communications letters}, vol.~23, no.~3, pp. 546--549, 2019.

\bibitem{cheng2020bivariate}
W.~Cheng and X.~Wang, ``Bivariate fisher--snedecor f distribution and its
  applications in wireless communication systems,'' \emph{IEEE Access}, vol.~8,
  pp. 146\,342--146\,360, 2020.

\bibitem{zhang2020dual}
P.~Zhang, J.~Zhang, K.~P. Peppas, D.~W.~K. Ng, and B.~Ai, ``Dual-hop relaying
  communications over fisher-snedecor f-fading channels,'' \emph{IEEE
  Transactions on Communications}, vol.~68, no.~5, pp. 2695--2710, 2020.

\bibitem{makarfi2020physical}
A.~Makarfi, K.~Rabie, O.~Kaiwartya, O.~Badarneh, G.~Nauryzbayev, and R.~Kharel,
  ``Physical layer security in ris-assisted networks in fisher-snedecor
  composite fading,'' in \emph{2020 12th International Symposium on
  Communication Systems, Networks and Digital Signal Processing
  (CSNDSP)}.\hskip 1em plus 0.5em minus 0.4em\relax IEEE, 2020, pp. 1--6.

\bibitem{badarneh2020achievable}
O.~S. Badarneh, P.~C. Sofotasios, S.~Muhaidat, S.~L. Cotton, K.~M. Rabie, and
  N.~Aldhahir, ``Achievable physical-layer security over composite fading
  channels,'' \emph{IEEE Access}, vol.~8, pp. 195\,772--195\,787, 2020.

\bibitem{han2021secrecy}
L.~Han, Y.~Wang, X.~Liu, and B.~Li, ``Secrecy performance of fso using hd and
  im/dd detection technique over f-distribution turbulence channel with
  pointing error,'' \emph{IEEE Wireless Communications Letters}, vol.~10,
  no.~10, pp. 2245--2248, 2021.

\bibitem{du2020sum}
H.~Du, J.~Zhang, J.~Cheng, and B.~Ai, ``Sum of fisher-snedecor f random
  variables and its applications,'' \emph{IEEE Open Journal of the
  Communications Society}, vol.~1, pp. 342--356, 2020.

\bibitem{badarneh2020product}
O.~S. Badarneh, P.~C. Sofotasios, S.~Muhaidat, S.~L. Cotton, and D.~B.
  Da~Costa, ``Product and ratio of product of fisher-snedecor $\mathcal{F}$
  variates and their applications to performance evaluations of wireless
  communication systems,'' \emph{IEEE Access}, vol.~8, pp. 215\,267--215\,286,
  2020.

\bibitem{gradshteyn2014table}
I.~S. Gradshteyn and I.~M. Ryzhik, \emph{Table of integrals, series, and
  products}.\hskip 1em plus 0.5em minus 0.4em\relax Academic press, 2014.

\bibitem{gijbels2014distribution}
I.~Gijbels and K.~Herrmann, ``On the distribution of sums of random variables
  with copula-induced dependence,'' \emph{Insurance: Mathematics and
  Economics}, vol.~59, pp. 27--44, 2014.

\bibitem{prudnikov1990more}
A.~Prudnikov, Y.~A. Brychkov, and O.~Marichev, ``More special functions
  (integrals and series vol 3),'' 1990.

\bibitem{peppas2012new}
K.~P. Peppas, ``A new formula for the average bit error probability of dual-hop
  amplify-and-forward relaying systems over generalized shadowed fading
  channels,'' \emph{IEEE Wireless Communications Letters}, vol.~1, no.~2, pp.
  85--88, 2012.

\bibitem{debnath2014integral}
L.~Debnath and D.~Bhatta, \emph{Integral transforms and their
  applications}.\hskip 1em plus 0.5em minus 0.4em\relax CRC press, 2014.

\bibitem{mathai2009h}
A.~M. Mathai, R.~K. Saxena, and H.~J. Haubold, \emph{The H-function: theory and
  applications}.\hskip 1em plus 0.5em minus 0.4em\relax Springer Science \&
  Business Media, 2009.

\bibitem{gupta1969integrals}
S.~Gupta, ``Integrals involving products of g-function,'' \emph{Proceedings of
  the National Academy of Sciences India SEction A-Physical Sciences}, vol.~39,
  p. 193, 1969.

\end{thebibliography}

\end{document}